\theoremstyle{change}
\newtheorem{definition}[equation]{Definition}
\newtheorem{theorem}[equation]{Theorem}
\newtheorem{lemma}[equation]{Lemma}
\newtheorem{cor}[equation]{Corollary}
\newtheorem{conj}[equation]{Conjecture}
\newtheorem{conjecture}[equation]{Conjecture}
\newtheorem{example}[equation]{Example}
\newtheorem{remark}[equation]{Remark}
\theoremstyle{nonumberplain}
\newtheorem{proof}{Proof}
\crefname{equation}{equation}{equations}
\crefname{eg}{example}{examples}
\crefname{defn}{definition}{definitions}
\crefname{prop}{proposition}{propositions}
\crefname{thm}{Theorem}{Theorems}
\crefname{lemma}{lemma}{lemmas}
\crefname{cor}{corollary}{corollaries}
\crefname{remark}{remark}{remarks}
\crefname{section}{Section}{Sections}
\crefname{subsection}{Section}{Sections}
\numberwithin{equation}{section}
\tikzset{dot/.style={circle,draw,fill,inner sep=1pt}}
\newcommand\setof[1]{\{ #1 \}}
\newcommand\reallywidehat[1]{%
\savestack{\tmpbox}{\stretchto{%
  \scaleto{%
    \scalerel*[\widthof{\ensuremath{#1}}]{\kern-.6pt\bigwedge\kern-.6pt}%
    {\rule[-\textheight/2]{1ex}{\textheight}}
  }{\textheight}%
}{0.5ex}}%
\stackon[1pt]{#1}{\tmpbox}%
}
\title{Defects in Spin Chains via Cluster Categories}
\author{Ammar Husain}
\begin{document}
\maketitle

\tikzset{->-/.style={decoration={
  markings,
  mark=at position #1 with {\arrow{>}}},postaction={decorate}}}
\tikzset{-<-/.style={decoration={
  markings,
  mark=at position #1 with {\arrow{<}}},postaction={decorate}}}

\abstract

We study Picard groups and $K_0$ groups for the cluster algebras that come from the cluster categories of \cite{HernandezLeclerc}. This is inspired by a study of boundaries and defects in the associated spin chains. We also do some discussion for their formal deformation quantizations.

\section{Introduction}

The algebraic Bethe ansatz is built from decorating the sites with representations of $U_q \hat{\mathfrak{g}}$. Boundaries for such systems are determined by giving a representation of a coideal subalgebra \cite{Letzter}. That gives a certain kind of module category. We may also talk about building bimodule categories in order to construct two-sided defects in the chain \cite{self}.

We would like to ask about invariants that help characterize (invertible) (bi-) module categories for these. However, that is complicated due to the nature of meromorphic tensor categories \cite{Soibelman} as well as potential failures in the monoidal nature of $K_0$. So we perform 2 simplifications. First we look at the subcategory defined in \cite{HernandezLeclerc} as cluster categories.

The next simplification is look at the ``reduced" theory. That is where there used to be the monoidal category $\mathcal{C}_l$ assigned to all the bulk regions, we instead assign $\mathcal{A} = K_0 ( \mathcal{C}_l ) \otimes \mathbb{Q}$. In this reduced theory, the codimension 1 walls are decorated by $\mathcal{A} - \mathcal{A}$ bimodules and codimension 2 phenomena have bimodule intertwiners \cite{DSPS}. Physical intuition says that there should be some procedure to go from a bimodule category which is being used as a defect wall in the original theory to a bimodule which is used as a defect in analogy with a 2-dimensional topological theory.\footnote{This is a mere analogy. It can not be pushed due to the failure of separable symmetric Frobenius algebra.} The naive guess of $K_0$ as the reduction fails unless the functor defining the module structures are bi-exact. That is an issue that has been sidestepped here by only asking about classifying (invertible) (bi)modules for these cluster algebras.

The use of the cluster category provides us with the tool of cluster algebras $\mathcal{A}$. These are significantly simpler algebras so the questions of their (bi)modules is significantly easier. That means we can focus on the problems of $Pic(\mathcal{A})$/$Aut(\mathcal{A})$ and $\mathcal{A}-(bi)mod$ needed for the invertible and not necessarily invertible cases respectively.

\section{Cluster Categories and Algebras}

\subsection{Cluster Categories}

\begin{definition}[Monoidal categorification]
For a cluster algebra $\mathcal{A}$, a monoidal categorification thereof is defined to be an abelian monoidal category such that it's Grothendieck ring is isomorphic to $\mathcal{A}$ and the cluster monomials are classes of real simple objects. Cluster variables are classes of real prime simple objects.
\end{definition}

Let us work with a $U_q \hat{\mathfrak{g}}$ for $\mathfrak{g}$ of ADE type of rank $n$  in the Drinfeld realization. Physically this means that we are in the Jordan-Wigner perspective rather than spins. This effects important properties because the different co-product means different entanglement structure. We are interested in representations of this as a category, but it is too large. That is the purpose of the following definition.

\begin{definition}[$\mathcal{C}_l$]
Color the vertices of the Dynkin diagram with $\xi_i = 0/1$. Now take the full subcategory of $Rep^{fd} U_q \hat{\mathfrak{g}}$ consisting of objects such that any simple composition factor and index $i \in I$, the roots of that Drinfeld polynomial are in the set $\setof{q^{-2k+\xi_i}}$ for $k \in \mathbb{Z}$. Call this $\mathcal{C}_\mathbb{Z}$.
Define $\mathcal{C}_l$ by only allowing $k \in [0,l]$ instead.\\
\end{definition}

\begin{lemma}[\cite{HernandezLeclerc}]
$K_0 (\mathcal{C}_\mathbb{Z})$ is generated by $[V_{i,q^{2k+\xi_i}}]$ and similarly for $K_0 (\mathcal{C}_l)$ but in the $\mathcal{C}_l$ case it is a polynomial ring in those $n(l+1)$ variables.\\
\end{lemma}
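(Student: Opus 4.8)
The plan is to reduce the statement to two imported facts — the Chari--Pressley classification of simple finite-dimensional $U_q\hat{\mathfrak{g}}$-modules and the Frenkel--Reshetikhin theory of $q$-characters — and then to run a triangularity argument with respect to the dominance order on $\ell$-weights. First I would recall that the simple objects of $Rep^{fd} U_q\hat{\mathfrak{g}}$ are parametrized by their Drinfeld polynomials, equivalently by dominant monomials $\mathbf{m} = \prod_{i,a} Y_{i,a}^{u_{i,a}}$ with $u_{i,a} \in \NN$; write $L(\mathbf{m})$ for the corresponding simple and $V_{i,a} = L(Y_{i,a})$ for the fundamental modules. The defining condition of $\mathcal{C}_\mathbb{Z}$ (resp.\ $\mathcal{C}_l$) is precisely that only the variables $Y_{i,a}$ with $a \in \setof{q^{-2k+\xi_i}}$, $k \in \mathbb{Z}$ (resp.\ $k \in [0,l]$), are allowed to occur, so the simple objects of these subcategories are indexed by the free commutative monoid on this set of variables. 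Since both categories have finite length, the classes $[L(\mathbf{m})]$ form a $\mathbb{Z}$-basis of $K_0$; in the $\mathcal{C}_l$ case the allowed variables number $n(l+1)$ (the $n$ nodes times the $l+1$ admissible spectral parameters per node), which will be the number of polynomial generators we are after.

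Next I would install the ring structure. The tensor product makes $K_0$ into a ring, and I would invoke the Frenkel--Reshetikhin result that this ring is commutative, which follows from the existence of a rational, spectral-parameter-dependent $R$-matrix identifying $V_a \otimes W_b$ with $W_b \otimes V_a$ for generic parameters. The essential tool is the injective ring homomorphism $\chi_q : K_0 \to \mathbb{Z}[Y_{i,a}^{\pm 1}]$: each $\chi_q([L(\mathbf{m})])$ has $\mathbf{m}$ as its unique maximal $\ell$-weight in the Nakajima dominance order, all other monomials being obtained from $\mathbf{m}$ by multiplying by factors $A_{j,b}^{-1}$. Forming the standard module $M(\mathbf{m}) = \bigotimes_{i,a} V_{i,a}^{\otimes u_{i,a}}$ (the order being irrelevant in $K_0$ by commutativity), its highest $\ell$-weight is again $\mathbf{m}$, so in $K_0$ one obtains the unitriangular relation $[M(\mathbf{m})] = [L(\mathbf{m})] + \sum_{\mathbf{m}' < \mathbf{m}} c_{\mathbf{m}'} [L(\mathbf{m}')]$ with $c_{\mathbf{m}'} \in \NN$.

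From here the conclusion is formal. The dominance order is bounded below on each fixed weight space, so the unitriangular system can be inverted (in $\mathcal{C}_l$ only finitely many $\mathbf{m}'$ lie below a given $\mathbf{m}$, so the induction terminates), expressing every $[L(\mathbf{m})]$ as a $\mathbb{Z}$-polynomial in the fundamental classes $[V_{i,a}]$; this proves generation for both $K_0(\mathcal{C}_\mathbb{Z})$ and $K_0(\mathcal{C}_l)$. For algebraic independence in the $\mathcal{C}_l$ case I would note that the products $\prod_{i,a} [V_{i,a}]^{u_{i,a}} = [M(\mathbf{m})]$ have pairwise distinct highest $\ell$-weights $\mathbf{m}$; injectivity of $\chi_q$ then forces them to be linearly independent, so no nontrivial polynomial relation among the $[V_{i,a}]$ can hold. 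Hence $K_0(\mathcal{C}_l)$ is a polynomial ring in the $n(l+1)$ fundamental classes, as claimed.

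The main obstacle is not the combinatorics but the two structural inputs: the commutativity of $K_0$ and the injectivity together with the triangularity of $\chi_q$ are the substantive facts (due to Frenkel--Reshetikhin, building on Chari--Pressley), and I would take these as given per \cite{HernandezLeclerc}. The only genuine care needed on our side is to confirm that the dominance order is well-founded on the relevant set of dominant monomials so that the triangular system is actually invertible, which is immediate in the finite $\mathcal{C}_l$ setting and requires the boundedness-below of each graded piece in the $\mathcal{C}_\mathbb{Z}$ setting.
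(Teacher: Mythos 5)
The paper offers no proof of this lemma at all --- it is imported verbatim from \cite{HernandezLeclerc} --- so there is no internal argument to compare yours against; what you have written is a reconstruction of the standard proof from the cited literature, and it is essentially correct. Generation via inverting the unitriangular (and column-finite) change of basis between standard modules $M(\mathbf{m})$ and simples $L(\mathbf{m})$, and algebraic independence via injectivity of $\chi_q$ together with the pairwise distinct highest $\ell$-weights of the monomials $\prod_{i,a}[V_{i,a}]^{u_{i,a}}$, is exactly how Hernandez--Leclerc (following Frenkel--Reshetikhin and Chari--Pressley) argue. The one step you pass over silently is that $\mathcal{C}_\mathbb{Z}$ and $\mathcal{C}_l$ are closed under tensor product: this is needed both for $K_0$ to carry a ring structure at all and for your standard modules to decompose \emph{inside} the subcategory, and it is not automatic from the definition (which constrains the Drinfeld polynomials of composition factors); it is itself a proposition in \cite{HernandezLeclerc}, resting on the fact that every dominant monomial occurring in $\chi_q(L(\mathbf{m}_1))\,\chi_q(L(\mathbf{m}_2))$ is supported on the allowed variables when $\mathbf{m}_1$ and $\mathbf{m}_2$ are. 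With that fact cited alongside the two structural inputs you already flag, your argument is complete.
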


\subsection{Cluster Algebra}

In order to show that $\mathcal{C}_{\mathfrak{g},l}$ is a monoidal categorification, one must say what cluster algebra it is categorifying. That is given in via a quiver.

\begin{definition}[$Q_{\mathfrak{g},l}$]
Define a quiver for the Dynkin diagram for $\mathfrak{g}$ and $l$ by the following procedure.

Orient the diagram via a bipartitioning where all the vertices with $\xi_i = 0$ are colored black and those with $\xi_i = 1$ are colored white. Then orient the edges as going from black to white. Call this quiver $Q_{\mathfrak{g},0}$.

Now form a new quiver with vertex set $(i,k)$ for $i \in I$ and $1 \leq k \leq l+1$ with three types of arrows.

\begin{itemize}
\setlength\itemsep{-1em}
\item $(i,k) \to (j , k)$ whenever $i \to j$ is an arrow in $Q_{\mathfrak{g},0}$ and all $k$\\
\item $(j,k) \to (i,k+1)$ for every arrow $i \to j$ in $Q_{\mathfrak{g},0}$ and all $1 \leq k \leq l$\\
\item $(i,k+1) \to (i,k)$ for all $i \in I$ and all $1 \leq k \leq l$\\
\end{itemize}

\end{definition}

\begin{definition}[$\mathcal{A}_{\mathfrak{g},l}$]
Define the quiver cluster algebra with $x_{i,l+1}$ all frozen variables. This defines a cluster algebra $\mathcal{A}_{\mathfrak{g},l} \subset \mathbb{Q} ( x_{i,m} )$, inside a field extension of $n(l+1)$ transcendental variables.
\end{definition}

\begin{conjecture}[Leclerc]
There is a ring isomorphism $K_0 ( \mathcal{C}_{l}) \otimes \mathbb{Q} \simeq \mathcal{A}_{\mathfrak{g},l}$
\end{conjecture}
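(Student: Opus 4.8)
The plan is to construct an explicit ring homomorphism $\Phi \colon \mathcal{A}_{\mathfrak{g},l} \to K_0(\mathcal{C}_l)\otimes\mathbb{Q}$ by prescribing it on an initial seed and then showing that the two multiplicative structures are forced to agree. First I would fix the initial cluster: to each vertex $(i,k)$ of $Q_{\mathfrak{g},l}$ I assign the class of a Kirillov--Reshetikhin (KR) module, namely the simple module whose Drinfeld polynomial is supported at node $i$ with roots forming a length-$k$ arithmetic string inside $\setof{q^{2k'+\xi_i}}$, with the frozen variable $x_{i,l+1}$ sent to the maximal ($k=l+1$) such KR class. Because distinct KR modules have distinct, multiplicatively independent highest $\ell$-weight monomials, and the class of each such module has that monomial as its leading term in the dominance order on $q$-characters, the $n(l+1)$ assigned classes are algebraically independent; hence $\Phi$ is well defined and injective on the initial cluster.

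The heart of the argument is to match cluster mutation with the T-system. I would show that the KR modules fit into the defining short exact sequences of the T-system inside $\mathcal{C}_l$, yielding in $K_0(\mathcal{C}_l)$ relations of the form $[W_{i,k}]\,[W'_{i,k}] = [W_{i,k-1}]\,[W_{i,k+1}] + \prod_{j\sim i}[W_{j,k}]$, where $j\sim i$ runs over the Dynkin neighbours of $i$. The crucial bookkeeping is that the two monomials on the right are precisely the products of the variables at the tails and at the heads of the arrows incident to $(i,k)$ in $Q_{\mathfrak{g},l}$, so that this T-system relation coincides on the nose with the Fomin--Zelevinsky exchange relation obtained by mutating the initial seed at $(i,k)$. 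Granting this, an induction on the mutation sequence shows that every cluster variable obtained from the initial KR seed maps to the class of a simple module, and that cluster monomials map to classes of real simple objects, as demanded by the Definition of monoidal categorification above.

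To finish, surjectivity of $\Phi$ follows because repeated mutation starting from the KR seed produces all the fundamental classes $[V_{i,q^{2k+\xi_i}}]$, which by the Lemma above generate the target ring. Injectivity follows since $\mathcal{A}_{\mathfrak{g},l}$ is an integral domain whose fraction field has transcendence degree $n(l+1)$, while $\Phi$ sends the initial cluster to $n(l+1)$ algebraically independent elements; were $\ker\Phi$ nonzero, the image would have strictly smaller transcendence degree, a contradiction, so $\ker\Phi=0$. Combining surjectivity and injectivity gives the ring isomorphism.

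The main obstacle is the categorical input of the second paragraph: establishing that the T-system holds as genuine short exact sequences of $U_q \hat{\mathfrak{g}}$-modules in $\mathcal{C}_l$ --- equivalently, that the relevant tensor products of KR modules have the predicted simple head and socle and carry no spurious composition factors. This is where one must deploy the theory of $q$-characters (Frenkel--Reshetikhin), the Frenkel--Mukhin algorithm, and the simplicity and positivity results of Nakajima type available in the ADE case, together with the fact that modules in $\mathcal{C}_l$ are ``thin'' enough that their $q$-characters are tightly controlled. A secondary difficulty is upgrading \emph{simple} to \emph{prime} and \emph{real} for the cluster variables: one must rule out nontrivial tensor factorizations and self-extensions, which is expected to follow from a compatibility between the cluster-algebraic $g$-vectors and $F$-polynomials and the highest $\ell$-weight monomials of the corresponding simple modules.
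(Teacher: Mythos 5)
The statement you are proving is presented in the paper as a \emph{conjecture}: the paper supplies no proof of it, and the theorem immediately following it only asserts the cases $\mathfrak{g}=A_1$ (arbitrary $l$) and $l=1$, deferring to \cite{HernandezLeclerc}. So there is no ``paper proof'' to match your argument against; what you have written is the standard Hernandez--Leclerc strategy (initial seed of Kirillov--Reshetikhin classes, mutation matched against the T-system, surjectivity from fundamental modules appearing as cluster variables, injectivity from transcendence degree). As a strategy outline it is accurate, and the overall architecture --- extend $\Phi$ to the ambient field using algebraic independence of the initial KR classes, then show the image of every exchange relation lands in $K_0(\mathcal{C}_l)\otimes\mathbb{Q}$ --- is the right one.

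However, this is not a proof, and you have in effect conceded as much in your final paragraph. The entire mathematical content of the conjecture is concentrated in the two steps you label as ``obstacles'': (i) that the first-step exchange relations are realized by genuine short exact sequences (the T-system) in $\mathcal{C}_l$ with no spurious composition factors, and more seriously (ii) that this persists under \emph{arbitrary iterated} mutation, i.e.\ that every cluster variable, not just those one mutation away from the KR seed, is the class of a real simple object. Your induction ``on the mutation sequence'' is asserted, not performed; after the first mutation the new cluster variables are no longer KR classes, the exchange relations are no longer instances of the T-system, and no general categorical mechanism is offered to produce the required exact sequences. This is precisely why the statement was a conjecture at the level of generality given here: it was established for $l=1$ and for $A_1$ by ad hoc verification, and the general case required substantially new input (quiver Hecke algebra / R-matrix techniques of Kang--Kashiwara--Kim--Oh and Qin) that your sketch does not supply. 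A second, smaller gap: your surjectivity claim that repeated mutation ``produces all the fundamental classes'' also requires proof; it is exactly the kind of statement that must be checked case by case or derived from the full monoidal categorification, not assumed.
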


\begin{theorem}
The conjecture is true if $\mathfrak{g}=A_1$ and arbitrary $l$. In this case the quiver $Q_{A_1,l}$ is an $A_{l+1}$ with one frozen vertex at the end.\\
The conjecture is also true for $l=1$ in this case it is 2 copies connected up nontrivially.
\end{theorem}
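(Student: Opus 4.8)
The plan is to construct the isomorphism on a single distinguished seed and then spread it over the whole algebra using the fact that simple objects factor into primes. Write $\mathcal{A}=K_0(\mathcal{C}_l)\otimes\mathbb{Q}$, which by the lemma of Hernandez--Leclerc is the polynomial ring on the $n(l+1)$ fundamental classes $f_{i,k}=[V_{i,q^{2k+\xi_i}}]$. I would assign to each initial cluster variable $x_{i,m}$ a prime simple (Kirillov--Reshetikhin) module, sending the frozen variables $x_{i,l+1}$ to the largest such modules still lying in $\mathcal{C}_l$; this determines a $\mathbb{Q}$-algebra map $\psi$ from the initial polynomial ring into $\mathcal{A}$. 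Since the chosen classes are related to the $f_{i,k}$ by a triangular substitution they are algebraically independent, so $\psi$ extends to an embedding of the ambient field $\mathbb{Q}(x_{i,m})$ into $\mathrm{Frac}(\mathcal{A})$. Two things then remain, both purely algebraic: first, that $\psi$ carries every cluster variable to the class of a simple module, which I would prove by induction on mutation distance, each step being the statement that the relevant exchange relation is a genuine decomposition identity in the Grothendieck ring; and second, that every fundamental class $f_{i,k}$ occurs as the image of some cluster variable. The first gives $\psi(\mathcal{A}_{\mathfrak{g},l})\subseteq\mathcal{A}$, the second gives $\supseteq$, and injectivity is inherited from the field embedding, so $\psi$ restricts to the asserted ring isomorphism; in particular this shows a posteriori that the cluster algebra, with its frozen variables left uninverted, is the predicted polynomial ring.

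For $\mathfrak{g}=A_1$ only the third family of arrows survives, so $Q_{A_1,l}$ is the linear orientation $(1,l+1)\to\cdots\to(1,1)$, an $A_{l+1}$ with frozen end $(1,l+1)$, as claimed. I would send $x_{1,k}$ to $[W_k]$, the height-$k$ Kirillov--Reshetikhin class placed on a staircase of spectral parameters, so that $x_{1,l+1}\mapsto[W_{l+1}]$ is the maximal module of $\mathcal{C}_l$. The boundary mutation at $(1,1)$ reads $x_{1,1}x_{1,1}^{*}=x_{1,2}+1$, matching the elementary fusion $[V_{1,a}][V_{1,aq^2}]=[W_2]+1$; an interior mutation at $(1,k)$ reads $x_{1,k}x_{1,k}^{*}=x_{1,k+1}+x_{1,k-1}$, which I would match to the Clebsch--Gordan identity $[V_1][W_k]=[W_{k+1}]+[W_{k-1}]$ after checking at the level of $\mathfrak{sl}_2$ characters that $([W_{k+1}]+[W_{k-1}])/[W_k]$ is again a fundamental class. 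Thus every produced cluster variable is a Kirillov--Reshetikhin class, the fundamental classes all occur (mutation at the various interior vertices sweeps out the spectral-parameter window), and the argument of the first paragraph closes the case. The residual bookkeeping is the standard type-$A$ cluster combinatorics, invoked only to confirm that the mutation graph exhausts exactly the prime simples of $\mathcal{C}_l$.

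For $l=1$ and general $\mathfrak{g}$ of ADE type the quiver is two copies of the bipartite orientation $Q_{\mathfrak{g},0}$, on the vertices $(i,1)$ and $(i,2)$, glued by the second and third families of arrows with the second copy frozen; this is the ``two copies connected up nontrivially'' of the statement. The mutable vertices are the $(i,1)$, and the exchange relation there depends on the colour of $i$: for a black node it is $x_{i,1}x_{i,1}^{*}=x_{i,2}+\prod_{j\sim i}x_{j,1}$, whereas for a white node products appear on both sides, $x_{i,1}x_{i,1}^{*}=x_{i,2}\prod_{j\sim i}x_{j,1}+\prod_{i'\sim i}x_{i',2}$. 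I would send $x_{i,1}$ to a fundamental class and the frozen $x_{i,2}$ to the height-two class $[W_2^{(i)}]$, which still lies in $\mathcal{C}_1$ because it occupies exactly the two available slots at node $i$, and then identify the two displayed relations with the level-one $T$-system (equivalently $Q$-system) decompositions of the products $[V_{i,1}][V_{i,q^2}]$. The single mutation at each $(i,1)$ then produces the second fundamental class, so surjectivity again holds and the first paragraph finishes the proof.

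I expect the main obstacle to be the matching of the white-node (and, for $A_1$, interior) mutations with the corresponding $T$-system identities in $K_0$: one must fix the spectral-parameter staircase so that the modules appearing carry exactly the shifts dictated by the $T$-system, and verify that the orientation of the three families of arrows reproduces the correct signs of the exchange matrix. For $A_1$ this reduces to elementary $\mathfrak{sl}_2$ Clebsch--Gordan, but for $l=1$ at higher rank it requires the genuine $\mathfrak{g}$-type $T$-system together with the primeness of the modules assigned to the cluster variables. Once these decomposition identities are in hand, the inductive argument and the generation statement of the first paragraph upgrade the seed-level map to the desired ring isomorphism.
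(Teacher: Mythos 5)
The paper offers no proof of this theorem at all --- it is stated as a citation of Hernandez--Leclerc (and, for the full ADE case at $l=1$, implicitly of Nakajima's quiver-variety proof) --- so the only thing to judge is whether your reconstruction of the literature's argument would actually close. Your seed-level setup is right: the quiver combinatorics for $Q_{A_1,l}$ and for $Q_{\mathfrak{g},1}$ (including the asymmetry between black and white exchange relations) check out, and the overall strategy --- assign Kirillov--Reshetikhin classes to the initial cluster, verify exchange relations as composition-series identities in $K_0$, propagate by induction on mutation distance, and get surjectivity from hitting the fundamental classes --- is the Hernandez--Leclerc strategy.

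The genuine gap is that the inductive step you describe as ``purely algebraic bookkeeping'' is the entire content of the theorem, and in the $l=1$ case at general ADE rank it fails as you have set it up. A finite-type cluster algebra of type $\mathfrak{g}$ has as many cluster variables as almost positive roots, far more than the $2n$ fundamental classes produced by one mutation at each $(i,1)$; establishing the inclusion $\psi(\mathcal{A}_{\mathfrak{g},1})\subseteq K_0(\mathcal{C}_1)\otimes\mathbb{Q}$ requires controlling \emph{every} deeper mutation, and the modules that appear there are not all Kirillov--Reshetikhin modules (already for $D_4$ at $l=1$ there is a prime simple that is not a KR class), so ``every produced cluster variable is a Kirillov--Reshetikhin class'' is false beyond type $A$. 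This is precisely why Hernandez--Leclerc could only settle $A_n$ and $D_4$ by hand-checking $T$-system-type identities and the general ADE statement at $l=1$ required geometric input (perverse sheaves on graded quiver varieties). For $\mathfrak{g}=A_1$ your plan is salvageable --- finite type $A_l$ has an explicit list of cluster variables and the needed identities are the extended $sl_2$ $T$-system --- but even there your matching $[V_1][W_k]=[W_{k+1}]+[W_{k-1}]$ is only correct for one specific adjacency of spectral parameters (generically that tensor product is simple), and after the first mutation the quiver changes, so the interior exchange relations you wrote down for the initial seed are not the ones that actually occur along the mutation sequence. You flag the spectral-parameter issue yourself, but the $l=1$ higher-rank case needs more than a more careful $T$-system: it needs a different method.
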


\begin{figure}[htb!]
\centering
\begin{tikzpicture}[every node/.style={circle,draw},thick]
  \node[rectangle](N1) at (0,0){$W_{1,2}$};
  \node[rectangle](N2) at (2,0){$W_{2,2}$};
  \node[rectangle](N3) at (4,0){$W_{3,2}$};
  \node[rectangle](N4) at (6,0){$W_{4,2}$};
  \node[rectangle](M1) at (0,2){$V_{1,1}$};
  \node[rectangle](M2) at (2,2){$V_{2,1}$};
  \node[rectangle](M3) at (4,2){$V_{3,1}$};
  \node[rectangle](M4) at (6,2){$V_{4,1}$};
  \draw[<-,>=stealth',semithick](N1.east)--(N2.west);
\draw[->,>=stealth',semithick](N2.east)--(N3.west);
\draw[<-,>=stealth',semithick](N3.east)--(N4.west);
  \draw[<-,>=stealth',semithick](M1.east)--(M2.west);
\draw[->,>=stealth',semithick](M2.east)--(M3.west);
\draw[<-,>=stealth',semithick](M3.east)--(M4.west);
\draw[->,>=stealth',semithick](N1.north)--(M1.south);
\draw[->,>=stealth',semithick](N2.north)--(M2.south);
\draw[->,>=stealth',semithick](N3.north)--(M3.south);
\draw[->,>=stealth',semithick](N4.north)--(M4.south);
\draw[->,>=stealth',semithick](M1.south)--(N2.north);
\draw[->,>=stealth',semithick](M2.south)--(N3.north);
\draw[->,>=stealth',semithick](M3.south)--(N4.north);
\end{tikzpicture}
\caption{$\mathcal{A}_{A_4,1}$ with $W$ nodes being frozen.\label{exampleQuiver}}
\end{figure}
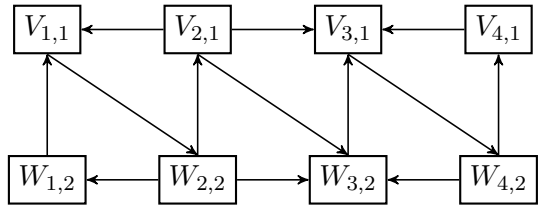

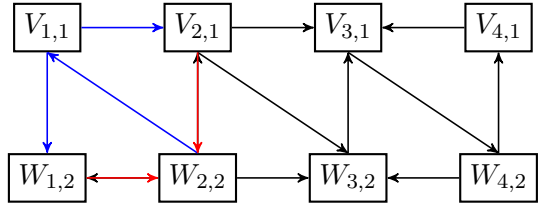
\begin{figure}[htb!]
\centering
\begin{tikzpicture}[every node/.style={circle,draw},thick]
  \node[rectangle](N1) at (0,0){$W_{1,2}$};
  \node[rectangle](N2) at (2,0){$W_{2,2}$};
  \node[rectangle](N3) at (4,0){$W_{3,2}$};
  \node[rectangle](N4) at (6,0){$W_{4,2}$};
  \node[rectangle](M1) at (0,2){$V_{1,1}$};
  \node[rectangle](M2) at (2,2){$V_{2,1}$};
  \node[rectangle](M3) at (4,2){$V_{3,1}$};
  \node[rectangle](M4) at (6,2){$V_{4,1}$};
  \draw[<-,>=stealth',semithick](N1.east)--(N2.west);
\draw[->,>=stealth',semithick](N2.east)--(N3.west);
\draw[<-,>=stealth',semithick](N3.east)--(N4.west);
  \draw[->,>=stealth',semithick,color=blue](M1.east)--(M2.west);
\draw[->,>=stealth',semithick](M2.east)--(M3.west);
\draw[<-,>=stealth',semithick](M3.east)--(M4.west);
\draw[<-,>=stealth',semithick,color=blue](N1.north)--(M1.south);
\draw[->,>=stealth',semithick](N2.north)--(M2.south);
\draw[->,>=stealth',semithick](N3.north)--(M3.south);
\draw[->,>=stealth',semithick](N4.north)--(M4.south);
\draw[<-,>=stealth',semithick,color=blue](M1.south)--(N2.north);
\draw[->,>=stealth',semithick](M2.south)--(N3.north);
\draw[->,>=stealth',semithick](M3.south)--(N4.north);
\draw[->,>=stealth',semithick,color=red](M2.south)--(N2.north);
\draw[->,>=stealth',semithick,color=red](N1.east)--(N2.west);
\end{tikzpicture}
\caption{Mutating at the node $V_{1,1}$. Red-inserted, Blue-flipped, need to remove 2-cycles.}
\end{figure}

\subsection{Automorphisms}

\begin{lemma}[2.21 of \cite{ChangZhu}\label{221}]
For a finite type cluster algebra $\mathcal{A}$ and $\mathcal{A}^{ex}$ the principal (unfrozen) part. Assuming $\mathcal{A}$ is gluing free, then the specialization map allows cluster automorphisms of $\mathcal{A}$ to give cluster automorphisms of $\mathcal{A}^{ex}$. That is $Aut(\mathcal{A}) \subset Aut(\mathcal{A}^{ex})$.
\end{lemma}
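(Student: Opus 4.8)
I would read the last displayed statement, Lemma~\ref{221}, as the assertion that specialization of the frozen variables induces an injective group homomorphism $\Aut(\mathcal{A})\to\Aut(\mathcal{A}^{ex})$, and I would organize the argument around that map. Write $\pi\colon \mathcal{A}\to\mathcal{A}^{ex}$ for the specialization sending every frozen variable (here the $x_{i,l+1}$) to $1$; this is the standard surjection of a coefficient cluster algebra onto its coefficient-free principal part. The first thing I would record is that $\pi$ is compatible with the combinatorics of mutation: mutating a seed of $\mathcal{A}$ in an exchangeable direction and then specializing agrees with specializing first and then mutating in $\mathcal{A}^{ex}$, since coefficient ($y$-variable) mutation collapses to trivial coefficients under $y\mapsto 1$ while the principal exchange matrix mutates identically in both settings. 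Consequently $\pi$ carries each cluster of $\mathcal{A}$ onto a cluster of $\mathcal{A}^{ex}$ and each mutable cluster variable onto a cluster variable.

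The role of the gluing-free hypothesis is exactly to make $\pi$ \emph{injective} on the set of cluster variables, i.e.\ distinct cluster variables of $\mathcal{A}$ never specialize to the same element of $\mathcal{A}^{ex}$. In finite type the set of cluster variables is finite, so combined with the previous paragraph this upgrades $\pi$ to a mutation-equivariant bijection between the cluster variables of $\mathcal{A}$ and those of $\mathcal{A}^{ex}$.

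Given $f\in\Aut(\mathcal{A})$, I would then define $\bar f$ on the cluster variables of $\mathcal{A}^{ex}$ by $\bar f(\pi(x)) := \pi(f(x))$ and extend it as an algebra map. This is well defined precisely because $\pi$ is injective on cluster variables, and it sends clusters to clusters and commutes with mutation because $f$ does and because $\pi$ intertwines the two mutation structures; hence $\bar f\in\Aut(\mathcal{A}^{ex})$. The defining relation $\bar f\circ\pi=\pi\circ f$ shows that $f\mapsto\bar f$ respects composition, so it is a group homomorphism. For injectivity, if $\bar f=\id$ then $\pi(f(x))=\pi(x)$ for every cluster variable $x$, so $f(x)$ and $x$ are cluster variables with the same specialization; gluing-free forces $f(x)=x$, and since cluster variables generate $\mathcal{A}$ we conclude $f=\id$.

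The step I expect to be the main obstacle is verifying with full care that $\bar f$ genuinely commutes with mutation in $\mathcal{A}^{ex}$ rather than merely permuting cluster variables as a set: one must check that the exchange relations survive, which requires tracking how the frozen data that $f$ may permute interacts with the specialization and confirming that the associated coefficient and sign data degenerate consistently to the coefficient-free exchange relations. This is the precise point where the gluing-free condition does real work, and I would isolate the mutation-compatibility of $\pi$ as the technical heart of the proof, with finiteness of the seed pattern used only to reduce everything to a permutation of a finite set.
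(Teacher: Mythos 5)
The first thing to say is that the paper does not prove this statement at all: it is imported verbatim as Lemma 2.21 of \cite{ChangZhu}, so there is no in-paper argument to compare yours against. Judged on its own terms, your reconstruction has the right overall shape --- specialize the frozen variables to $1$ to get $\pi\colon\mathcal{A}\to\mathcal{A}^{ex}$, check that $\pi$ intertwines mutation, and transport an automorphism $f$ to $\bar f$ via $\bar f\circ\pi=\pi\circ f$ --- and the mutation-compatibility of $\pi$ is indeed unproblematic, since setting the frozen variables to $1$ in an exchange relation yields exactly the coefficient-free exchange relation of the principal part.

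There is, however, a genuine gap in your injectivity step, and it sits exactly where you lean on the gluing-free hypothesis. You read ``gluing free'' as injectivity of $\pi$ on the set of cluster variables, and you conclude $f=\id$ from $\bar f=\id$ by arguing that cluster variables generate $\mathcal{A}$. But $\mathcal{A}$ is generated by the exchangeable cluster variables \emph{together with the frozen variables}, and $\pi$ annihilates the latter (sends them to $1$), so the identity $\pi(f(x))=\pi(x)$ carries no information about how $f$ permutes the frozen variables. An $f$ that fixes every exchangeable cluster variable but nontrivially permutes frozen ones would have $\bar f=\id$ and would defeat your argument as written. In \cite{ChangZhu} the gluing-free condition is formulated at the level of \emph{seeds} (no two distinct seeds of $\mathcal{A}$ specialize to the same seed of the principal part), and it is that seed-level statement --- applied to a seed $\Sigma$ and to $f(\Sigma)$, which share a principal part when $\bar f=\id$ --- that pins down the frozen data and forces $f=\id$. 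Relatedly, your claim that $\bar f$ ``extends as an algebra map'' is cleaner if phrased as: $f$ preserves $\ker\pi$, so $\bar f$ descends along the surjection $\pi$; defining it only on generators leaves the verification of relations implicit. None of this is fatal, but the frozen variables are precisely the part of the structure that the specialization forgets, and any proof of injectivity must say explicitly how they are controlled.
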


The cases of $Q_{\mathfrak{g},1}$ and $Q_{A_1,l}$ will have the property that their mutable parts are orientations of a simply laced Dynkin diagram so this lemma will apply for them.

\begin{lemma}
There is a homomorphism $Pic \mathcal{A} \to Pic \mathcal{A}^{ex}$
\end{lemma}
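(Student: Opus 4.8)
The plan is to build the map by base change along the specialization homomorphism that already appears in the preceding lemma. Write $\pi : \mathcal{A} \to \mathcal{A}^{ex}$ for the ring homomorphism that sends each frozen variable $x_{i,l+1}$ to $1$; this is exactly the map through which \cite{ChangZhu} pushes cluster automorphisms from $\mathcal{A}$ to $\mathcal{A}^{ex}$, so its existence and its compatibility with the cluster structure are given to us. Since $\mathcal{A}$ is a commutative domain sitting inside $\mathbb{Q}(x_{i,m})$, I read $Pic(\mathcal{A})$ as the group of invertible $\mathcal{A}$--$\mathcal{A}$ bimodules under $\otimes_{\mathcal{A}}$ (the Morita self-equivalences that classify invertible defects), which for a commutative ring splits as $Pic_{\mathrm{geo}}(\mathcal{A}) \rtimes Aut(\mathcal{A})$; the construction below specializes to the classical rank-one-projective Picard group as well.

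First I would record that $Pic$ is functorial under ring homomorphisms via base change. Given an invertible bimodule $M$ over $\mathcal{A}$, form
\[
\pi_*(M) := \mathcal{A}^{ex} \otimes_{\mathcal{A}} M \otimes_{\mathcal{A}} \mathcal{A}^{ex},
\]
an $\mathcal{A}^{ex}$--$\mathcal{A}^{ex}$ bimodule, where the two outer tensor factors use the two one-sided $\mathcal{A}$-module structures on $\mathcal{A}^{ex}$ supplied by $\pi$. Next I would check the three properties that make $\pi_*$ a group homomorphism: (i) it sends the unit $\mathcal{A}$ to $\mathcal{A}^{ex}$; (ii) it is monoidal, i.e. $\pi_*(M \otimes_{\mathcal{A}} N) \cong \pi_*(M) \otimes_{\mathcal{A}^{ex}} \pi_*(N)$, which follows from $\mathcal{A}^{ex} \otimes_{\mathcal{A}} \mathcal{A}^{ex} \cong \mathcal{A}^{ex}$ (idempotency of the quotient $\pi$) collapsing the inner factor; and (iii) it preserves invertibility, since base change carries the inverse $M^{-1}$ to an inverse of $\pi_*(M)$ because $\pi$ is a ring map and hence commutes with forming duals. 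Passing to isomorphism classes then yields the desired homomorphism $Pic(\mathcal{A}) \to Pic(\mathcal{A}^{ex})$, and on the automorphism summand this recovers precisely the inclusion of the preceding lemma, so the two statements are compatible.

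The main obstacle is not existence but the \emph{failure of injectivity}, which is why the lemma only claims a homomorphism rather than the containment obtained for $Aut$. Base change can trivialize a nontrivial invertible module: the specialization $x_{i,l+1} \mapsto 1$ kills exactly the coefficient directions that can support the nontrivial part of $Pic(\mathcal{A})$, so one should not expect $\pi_*$ to be injective. The point that genuinely needs care is that $\pi$ is a homomorphism of rings and not merely a module map, as this is what guarantees that the left and right $\mathcal{A}^{ex}$-actions on $\pi_*(M)$ are compatible and that the monoidality isomorphism in (ii) is associative, so that $\pi_*$ is well defined on the group and not just on the underlying set. Verifying these coherences, together with confirming that the gluing-free, finite-type hypotheses that license the specialization map in the preceding lemma do hold for the quivers $Q_{\mathfrak{g},1}$ and $Q_{A_1,l}$ of interest, is the substance of the argument; the base-change bookkeeping itself is routine.
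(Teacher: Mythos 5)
Your core idea is exactly the paper's: the specialization $\pi:\mathcal{A}\to\mathcal{A}^{ex}$ is a ring homomorphism, and $Pic$ is functorial under ring homomorphisms via base change. The paper's entire proof is the one sentence ``Specialization gives an algebra map $\mathcal{A}\to\mathcal{A}^{ex}$,'' so for the commutative Picard group $Pic_{com}$ (which is what appears in all of the paper's later diagrams, where the vertical arrows are $Pic_{com}(\mathcal{A})\to Pic_{com}(\mathcal{A}^{ex})$) your argument is the intended one, just written out: $M\mapsto \mathcal{A}^{ex}\otimes_{\mathcal{A}}M$ sends rank-one projectives to rank-one projectives, commutes with $\otimes_{\mathcal{A}}$, and hence induces a group homomorphism. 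Your remark that the content is functoriality rather than injectivity is also the right reading of why the lemma is stated as a bare homomorphism.

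Where your elaboration overclaims is in steps (ii)--(iii) for the \emph{bimodule} Picard group $Pic_k$. By the Yekutieli decomposition quoted later in the paper, an invertible $\mathcal{A}$--$\mathcal{A}$ bimodule is of the form ${}_{1}\mathcal{A}_{\sigma}\otimes_{\mathcal{A}}P$ with $\sigma\in Aut_k(\mathcal{A})$, i.e.\ its left and right actions differ by a twist. For such $M$ the comparison $\pi_*(M\otimes_{\mathcal{A}}N)\cong\pi_*(M)\otimes_{\mathcal{A}^{ex}}\pi_*(N)$ requires collapsing a factor $M\otimes_{\mathcal{A}}\mathcal{A}^{ex}\otimes_{\mathcal{A}}N$ down to $M\otimes_{\mathcal{A}}N\otimes_{\mathcal{A}}\mathcal{A}^{ex}$, and with $I=\ker\pi$ this amounts to $MIN=MNI$, which fails unless $\sigma(I)\subseteq I$. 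So two-sided base change is \emph{not} automatically monoidal, and $\pi_*({}_1\mathcal{A}_\sigma)$ need not even be invertible, unless the automorphism descends along $\pi$. That is precisely the content of the gluing-free hypothesis in Lemma 2.21 of Chang--Zhu, which the paper invokes separately to handle the $Aut$ factor; your proof should either restrict the lemma to $Pic_{com}$ (matching the paper's usage) or explicitly restrict to the subgroup of bimodules whose twisting automorphisms preserve the ideal generated by $\setof{x_{i,l+1}-1}$. The base-change bookkeeping is otherwise fine.
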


\begin{proof}
Specialization gives an algebra map $\mathcal{A} \to \mathcal{A}^{ex}$
\end{proof}

\begin{theorem}[\label{clustAut} Automorphisms of finite type\cite{ChangZhu}]
The cluster automorphisms of the principal parts for the $\mathcal{A}_{\mathfrak{g},1}$ are:

\begin{figure}[htb!]
\centering
\begin{tabular}{ l | r }
$\mathfrak{g}$ & Aut\\
\hline
$A_1$ & $\mathbb{Z}_2$\\
$A_{n > 1}$ &$D_{n+3}$\\
$D_4$&$D_{4} \times S_3$\\
$D_{n \geq 5}$&$ D_n \times \mathbb{Z}_2$\\
$E_6 $&$D_{14}$\\
$E_7 $&$ D_{10}$\\
$E_8 $&$D_{16}$\\
\end{tabular}
\end{figure}

\end{theorem}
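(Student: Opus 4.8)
The plan is to isolate the one nontrivial input, namely the mutation type of the principal part $\mathcal{A}^{ex}_{\mathfrak{g},1}$, and then quote the classification of \cite{ChangZhu}. Once that principal part is seen to be of finite type, the gluing-free hypotheses of \cref{221} are satisfied, so $\Aut(\mathcal{A}_{\mathfrak{g},1}) \subset \Aut(\mathcal{A}^{ex}_{\mathfrak{g},1})$ and it suffices to compute the right-hand side.

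First I would pin down the principal part. For $l=1$ the vertex set of $Q_{\mathfrak{g},1}$ is $\{(i,k): i \in I,\ k \in \{1,2\}\}$, with the $(i,2)$ frozen. Inspecting the three families of arrows in the definition of $Q_{\mathfrak{g},l}$ at $l=1$, the arrows $(i,2)\to(j,2)$, the arrows $(j,1)\to(i,2)$, and the arrows $(i,2)\to(i,1)$ each meet the frozen level $k=2$, so none of them survives in the full subquiver on the mutable vertices. The only arrows internal to the level $k=1$ are $(i,1)\to(j,1)$ coming from $i\to j$ in $Q_{\mathfrak{g},0}$. Hence the mutable subquiver is isomorphic to $Q_{\mathfrak{g},0}$, an acyclic orientation of the Dynkin diagram of $\mathfrak{g}$, and $\mathcal{A}^{ex}_{\mathfrak{g},1}$ is exactly the coefficient-free cluster algebra of finite type $\mathfrak{g}$. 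This is the observation flagged just after \cref{221}.

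Next I would pass to a mutation invariant and read off the groups. Cluster automorphisms are intrinsic to $\mathcal{A}^{ex}_{\mathfrak{g},1}$ and do not see the chosen orientation, since any two acyclic orientations of a fixed Dynkin diagram are mutation equivalent and define isomorphic cluster algebras; thus $\Aut(\mathcal{A}^{ex}_{\mathfrak{g},1})$ depends only on the Cartan--Killing type and the type-by-type answer of \cite{ChangZhu} applies verbatim. For $A_n$ with $n\geq 2$ the clusters are triangulations of a regular $(n+3)$-gon with cluster variables the diagonals, every cluster automorphism is induced by a symmetry of the polygon, and one obtains the dihedral group $D_{n+3}$; the degenerate rank-one case $A_1$ retains only the single flip, giving $\mathbb{Z}_2$. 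For $D_n$ the model is tagged triangulations of a once-punctured $n$-gon, whose rotations and reflections give $D_n$ and whose two taggings at the puncture give the extra $\mathbb{Z}_2$, so $\Aut = D_n \times \mathbb{Z}_2$ for $n\geq 5$; at $n=4$ the triality of the $D_4$ diagram enlarges the second factor to $S_3$. For $E_6, E_7, E_8$ there is no polygon model and the groups $D_{14}, D_{10}, D_{16}$ come from the direct exchange-graph computation: the cyclic part is the rotation generated by a square root of the Auslander--Reiten translation of the cluster category, of order $h+2$ or $(h+2)/2$ according to whether the longest Weyl element acts as $-1$. It does for $E_7$ and $E_8$, halving $20$ and $32$ to $10$ and $16$, and does not for $E_6$, leaving $14$.

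The reduction in the first two steps is routine bookkeeping; the genuine depth, which I would import rather than reprove, is the surjectivity half of \cite{ChangZhu}. That is the assertion that the natural comparison map from $\Aut(\mathcal{A}^{ex})$ to the automorphism group of the exchange graph is an isomorphism and that every graph automorphism is realized by an honest cluster automorphism, direct or inverse, together with the case-by-case verification for the exceptional types where no surface model is available. This is the step I expect to be the main obstacle should one wish to make the argument self-contained.
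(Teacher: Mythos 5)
Your proposal matches the paper's treatment: the paper gives no proof of this theorem at all, importing the table wholesale from \cite{ChangZhu}, and merely records in the surrounding text the same observation you verify explicitly, namely that the mutable subquiver of $Q_{\mathfrak{g},1}$ is an orientation of the Dynkin diagram of $\mathfrak{g}$ (all arrows of the second and third families touch the frozen level $k=2$), so the principal part is the coefficient-free finite-type cluster algebra and the classification applies. Your reconstruction of where the groups come from (polygon and punctured-polygon models for $A_n$ and $D_n$, exchange-graph rotation of order $h+2$ or $(h+2)/2$ for the exceptional types) is consistent with the cited source; the only small caution is in your opening paragraph, where you infer the gluing-free hypothesis of \cref{221} from finiteness of type --- that hypothesis concerns the frozen variables of the full algebra $\mathcal{A}$ and is logically independent of the type of the principal part, though it is not needed for the statement of this theorem itself.
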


This is much more specific then all automorphisms as rings but we still get a subgroup of all automorphisms. These types of automorphisms must take clusters to clusters and commute with mutations. By the lemma \cref{221}, we must look at a subgroup to account for the frozen variables.

\section{Bimodule Categories}

When looking at the reflection equation we produce coideal subalgebras whose representations produce module categories. Similarly for two-sided defects we desire to produce comodule algebras for two copies using a folding trick. In summary we seek to produce bimodule categories.

\begin{theorem}[\cite{DSPS}]
In the context of 3-dimensional topological field theories, this happens without the affinization. That is bimodule categories produce codimension 1 strata in the cobordism hypothesis with singularities. They may or may not be invertible.
\end{theorem}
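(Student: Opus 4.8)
The plan is to derive the statement from Lurie's cobordism hypothesis with singularities, taking as target the symmetric monoidal $3$-category $\mathrm{TC}$ whose objects are (finite, non-affine) fusion categories, whose $1$-morphisms are bimodule categories, whose $2$-morphisms are bimodule functors, and whose $3$-morphisms are bimodule natural transformations, with the Deligne product $\boxtimes$ as monoidal structure and $\mathrm{Vect}$ as unit. In a framed stratified $3$-manifold the codimension $0$ regions are labeled by objects of $\mathrm{TC}$, and the local model of a codimension $1$ wall between regions labeled $\mathcal{C}$ and $\mathcal{D}$ forces the attached datum to be a $1$-morphism $\mathcal{C}\to\mathcal{D}$; by construction such a $1$-morphism is exactly a $\mathcal{C}$-$\mathcal{D}$-bimodule category, which is the assertion. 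The phrase ``without the affinization'' is precisely the statement that we work in $\mathrm{TC}$ rather than in the meromorphic tensor categories of the affine quantum group.

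First I would state the singular cobordism hypothesis in the form needed: a fully extended framed theory on manifolds with singularities of prescribed type is determined by a compatible system of dualizability data, a codimension $k$ singularity contributing a $k$-morphism together with all adjunction data witnessing that it descends through the lower strata. For $k=0$ this is the requirement that the bulk label be a \emph{fully dualizable} object; for $k=1$ it is a $1$-morphism, i.e.\ a bimodule category, required to admit adjoints as a $1$-morphism.

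Next I would establish that $\mathrm{TC}$ carries enough duality for this to be consistent, which is the content of \cite{DSPS}. The dual of $\mathcal{C}$ is its opposite $\mathcal{C}^{\mathrm{op}}$; evaluation and coevaluation are the regular bimodule $\mathcal{C}$ viewed as a module over $\mathcal{C}\boxtimes\mathcal{C}^{\mathrm{op}}$, and one verifies the zig-zag identities up to coherent $2$- and $3$-morphisms. One then checks that every $1$-morphism (bimodule category) and every $2$-morphism (bimodule functor) admits both a left and a right adjoint, so that a bimodule category attached to a wall always extends to the consistent defect datum demanded above. Invertibility of the bimodule category as a $1$-morphism corresponds exactly to the wall being a Morita (transparent) interface rather than a genuine defect, which gives the closing ``may or may not be invertible'' clause.

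The hard part will be the full dualizability verification, specifically constructing the adjoints of a bimodule category ${}_{\mathcal{C}}M_{\mathcal{D}}$ and checking the higher coherences, namely the existence of the Serre automorphism and the data trivializing it. This is where semisimplicity and separability enter: over a field of characteristic zero a fusion category is separable, and separability supplies the dual module categories and the finiteness needed to produce adjoints at every level. Once one leaves this separable symmetric Frobenius regime these adjoints can fail to exist and the wall no longer defines a stratum; this is the obstruction flagged in the footnote, and the reason the naive $K_0$ reduction and the two-dimensional analogy cannot be pushed further.
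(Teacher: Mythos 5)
The paper offers no proof of this statement: it is quoted as an imported result of \cite{DSPS}, so there is nothing internal to compare your argument against. Your sketch is, however, a broadly faithful account of what \cite{DSPS} actually prove, namely that in the symmetric monoidal $3$-category of finite tensor categories, bimodule categories, bimodule functors and bimodule natural transformations, the separable tensor categories are precisely the fully dualizable objects, so that by the cobordism hypothesis with singularities a bimodule category is the correct local datum for a codimension~$1$ wall, invertible exactly when it is a Morita equivalence. Two corrections are worth making. First, the dual of $\mathcal{C}$ as an object of this $3$-category is the monoidal opposite $\mathcal{C}^{\mathrm{mp}}$ (tensor product reversed), not $\mathcal{C}^{\mathrm{op}}$ (morphisms reversed); evaluation is $\mathcal{C}$ regarded as a $\mathcal{C}\boxtimes\mathcal{C}^{\mathrm{mp}}$-module. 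Second, it is not true that \emph{every} bimodule category admits adjoints as a $1$-morphism: one needs finiteness of the module category and exactness hypotheses on the action functors, which is exactly why the theorem lives in the separable (fusion, in characteristic zero) setting and why the paper's footnote warns that the two-dimensional analogy with separable symmetric Frobenius algebras cannot be pushed to the affine/meromorphic situation. With those caveats your outline correctly locates the hard content (adjoints at every categorical level, the Serre automorphism and the Radford-type data trivializing the quadruple dual), but as written it is an extended pointer to \cite{DSPS} rather than a self-contained argument --- which is also all the paper itself supplies.
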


\begin{lemma}
If $\mathcal{C}$ is the original category and $\mathcal{D}$ is the bimodule category. Let us also assume the $\mathcal{C} \times \mathcal{D} \to \mathcal{D}$ and vice versa are bi-exact. Then $K_0 (\mathcal{D})$ is both a left and right $K_0 ( \mathcal{C})$ bimodule. We may extend to $\mathbb{Q}$ and potentially produce a bimodule over $K_0 ( \mathcal{C}) \otimes \mathbb{Q}$ but now $\mathbb{Q}$ linear.

If $\mathcal{C}_1 \subset \mathcal{C}_2$ is a full subcategory and $\mathcal{D}$ is a bimodule category (satisfying the exactness assumptions) for $\mathcal{C}_2$ we may restrict $K_0 ( \mathcal{D}) \otimes \mathbb{Q}$ as a $K_0 (\mathcal{C}_2) \otimes \mathbb{Q} - K_0 (\mathcal{C}_2) \otimes \mathbb{Q}$ bimodule to $K_0 (\mathcal{C}_1) \otimes \mathbb{Q} - K_0 (\mathcal{C}_1) \otimes \mathbb{Q}$.
\end{lemma}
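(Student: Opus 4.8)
The plan is to treat both assertions as instances of two standard facts: that a bi-exact action functor descends to Grothendieck groups, and that a monoidal exact inclusion induces a ring homomorphism along which one restricts scalars. First I would recall that for an abelian category $\mathcal{E}$ the group $K_0(\mathcal{E})$ is the free abelian group on isomorphism classes $[E]$ modulo the relations $[B] = [A] + [C]$ for every short exact sequence $0 \to A \to B \to C \to 0$. Writing $\star \colon \mathcal{C} \times \mathcal{D} \to \mathcal{D}$ for the left action functor, I define the candidate pairing on generators by $([X],[Y]) \mapsto [X \star Y]$ and extend bilinearly.

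The crux is well-definedness, and here the bi-exactness hypothesis does all the work. Fixing $Y \in \mathcal{D}$, exactness of $X \mapsto X \star Y$ sends a short exact sequence $0 \to X' \to X \to X'' \to 0$ in $\mathcal{C}$ to a short exact sequence in $\mathcal{D}$, so $[X \star Y] = [X' \star Y] + [X'' \star Y]$, which is exactly the relation needed for the pairing to respect the defining relations of $K_0(\mathcal{C})$ in the first slot; symmetrically, exactness in the second variable handles the relations of $K_0(\mathcal{D})$. Thus the pairing factors through a homomorphism $K_0(\mathcal{C}) \otimes_{\mathbb{Z}} K_0(\mathcal{D}) \to K_0(\mathcal{D})$. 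Associativity and unitality then follow by applying $K_0$ to the module-category coherence isomorphisms $(X \otimes X') \star Y \simeq X \star (X' \star Y)$ and $\mathbf{1} \star Y \simeq Y$, using that isomorphic objects define equal classes, so that $([X][X'])[Y] = [X]([X'][Y])$. Repeating the argument for the right action $\mathcal{D} \times \mathcal{C} \to \mathcal{D}$ and applying $K_0$ to the left--right interchange isomorphism yields the two-sided $K_0(\mathcal{C})$-bimodule structure. Extension of scalars is then formal: since $\mathbb{Q}$ is flat over $\mathbb{Z}$, applying $-\otimes_{\mathbb{Z}}\mathbb{Q}$ preserves the ring and bimodule structures, giving a $\mathbb{Q}$-linear $K_0(\mathcal{C})\otimes\mathbb{Q}$-bimodule.

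For the second claim I would observe that the inclusion $\mathcal{C}_1 \hookrightarrow \mathcal{C}_2$ of a full (monoidal, exact) subcategory is an exact monoidal functor, hence induces a ring homomorphism $\phi \colon K_0(\mathcal{C}_1) \to K_0(\mathcal{C}_2)$ and, after $\otimes\,\mathbb{Q}$, a homomorphism of $\mathbb{Q}$-algebras $\phi_{\mathbb{Q}}$. Because the action of $\mathcal{C}_1$ on $\mathcal{D}$ is by definition the restriction of the $\mathcal{C}_2$-action along this inclusion, the resulting $K_0(\mathcal{C}_1)\otimes\mathbb{Q}$-bimodule structure on $K_0(\mathcal{D})\otimes\mathbb{Q}$ is precisely the restriction of scalars of the $K_0(\mathcal{C}_2)\otimes\mathbb{Q}$-bimodule along $\phi_{\mathbb{Q}}$ on each side, which is the asserted restriction.

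The main obstacle is less a computation than a matter of flagging the right running hypotheses, which is why the statement hedges with \emph{potentially}. Two points deserve care. First, $K_0(\mathcal{C})$ is a genuine ring only when the monoidal product of $\mathcal{C}$ is itself bi-exact; absent this one obtains merely a compatible pair of one-sided pairings over an abelian group rather than a true bimodule, and the multiplicativity step in the second paragraph fails. Second, for the restriction in Part 2 to be literal restriction of scalars one needs the embedding $\mathcal{C}_1 \hookrightarrow \mathcal{C}_2$ to be exact, so that $K_0$ is functorial on it, and monoidal, so that $\phi$ is a ring map; a full subcategory need not be closed under extensions in general, so these should be stated explicitly as standing assumptions.
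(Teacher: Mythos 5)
The paper offers no proof of this lemma at all --- it is stated as a standard fact and the text moves directly on to identifying $\mathcal{C}_1$ with the cluster category. Your argument is the correct standard one that fills in what the paper leaves implicit: bi-exactness in each variable is exactly what makes the pairing $([X],[Y])\mapsto[X\star Y]$ respect the short-exact-sequence relations defining $K_0$, coherence isomorphisms give associativity, unitality, and the left--right compatibility, $-\otimes_{\mathbb{Z}}\mathbb{Q}$ transports the structure, and the second claim is restriction of scalars along the ring map induced by the exact monoidal inclusion. Your closing caveats (bi-exactness of the tensor product of $\mathcal{C}$ itself, and exactness/monoidality of the embedding $\mathcal{C}_1\hookrightarrow\mathcal{C}_2$) are genuine hypotheses the paper does not state but which hold in its intended setting of full monoidal subcategories of $Rep^{fd}\,U_q\hat{\mathfrak{g}}$ closed under subquotients.
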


Here we let $\mathcal{C}_1$ be the cluster category and $\mathcal{C}_2$ be without constraints on the Drinfeld polynomials.

A major caution here is that we have not described any operation of fusing bimodule categories. Therefore any statements about the fusing of bimodules over cluster algebras (including the criterion of invertibility) does not necessarily lift to the category level. That is we are not talking about bimodule categories that describe invertible defects, but instead simply those that go to invertible objects in the simplification. This also means that the group operation on $Pic/K_0$ is only useful in the reduction.

\section{Picard Groups and $K_0$}

We may define $Pic_k (A)$ as invertible bimodules for a commutative ring $A$ where we only demand that $k$ be able to commute through the bimodule. This contrasts with the module definition ordinarily used. However, we may reduce the computation to two steps.

\begin{theorem}[Pic as bimodule \cite{Yekutieli}]
$Pic_k (A) \simeq Aut_k (A) \ltimes Pic_A (A)$ where $Pic_A (A)$ is equivalent to the usual $Pic_{com}$ with invertible modules rather than bimodules. $Aut_k (A)$ is the automorphisms as a $k$ algebra.
\end{theorem}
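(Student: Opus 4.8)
The plan is to exhibit $Pic_k(A)$ as the middle term of a split short exact sequence
$$ 1 \to Pic_A(A) \to Pic_k(A) \xrightarrow{\ \phi\ } Aut_k(A) \to 1 $$
and then read the claimed decomposition off the splitting. The one structural input I would import, from Morita theory specialized to the commutative base $A$, is that an invertible $k$-central bimodule $M$ is automatically finitely generated projective of constant rank one when regarded as a one-sided module; everything else is bookkeeping around this fact.

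First I would construct $\phi$. Viewing an invertible bimodule $M$ as a right $A$-module makes it a line bundle, and since $A$ is commutative, multiplication gives a canonical isomorphism $A \xrightarrow{\sim} End_A(M)$ onto the right-endomorphism ring. The bimodule axiom says left multiplication by $a \in A$ is right-$A$-linear, so the left action factors as a unital, $k$-linear ring map $\lambda_M : A \to End_A(M) \cong A$, which is invertible because $M$ is; thus $\lambda_M \in Aut_k(A)$ and I set $\phi(M) = \lambda_M$. Compatibility of the identifications $End_A(M \otimes_A N) \cong End_A(M) \otimes_A End_A(N) \cong A$ with the left actions shows $\phi$ is a homomorphism (up to fixing left/right conventions so composition, not its opposite, appears). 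The kernel consists of those $M$ whose left and right actions agree under the canonical identification, i.e. the invertible honest $A$-modules; this is exactly $Pic_A(A)$, the classical $Pic_{com}(A)$.

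For the splitting I would send $\sigma \in Aut_k(A)$ to the twist bimodule $A^\sigma$, namely $A$ with its standard right action and left action $a \cdot x := \sigma(a)x$. One checks $A^\sigma$ is invertible with inverse $A^{\sigma^{-1}}$, that $A^\sigma \otimes_A A^\tau \cong A^{\sigma\tau}$, and that $\phi(A^\sigma) = \sigma$, so $s(\sigma) = [A^\sigma]$ is a section homomorphism. The sequence therefore splits and yields $Pic_k(A) \simeq Aut_k(A) \ltimes Pic_A(A)$, with $Aut_k(A)$ acting on $Pic_{com}(A)$ by pulling back line bundles along automorphisms.

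The main obstacle I anticipate is not any deep input but the consistency of conventions across the middle steps: one must pin down the identification $End_A(M) \cong A$ so that it is natural in $M$ and behaves correctly under $\otimes_A$, and simultaneously keep the left-versus-right actions aligned so that $\phi$ emerges as a homomorphism and the induced action of $Aut_k(A)$ on $Pic_{com}(A)$ is the pullback action rather than its inverse. Granting the Morita fact that an invertible central bimodule is forced to be one-sided projective of rank one, the remainder is the semidirect-product packaging of the split exact sequence.
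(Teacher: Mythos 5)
The paper does not prove this statement at all: it is imported verbatim as a theorem of \cite{Yekutieli} with no proof environment attached, so there is no ``paper's argument'' to compare against. Your proposal is the standard proof of this decomposition and it is essentially correct. The split exact sequence $1 \to Pic_A(A) \to Pic_k(A) \to Aut_k(A) \to 1$, the map $\phi(M)=\lambda_M$ obtained by factoring the left action through the canonical isomorphism $A \xrightarrow{\sim} End_A(M)$, the identification of the kernel with symmetric (honest one-sided) invertible modules, and the section $\sigma \mapsto A^\sigma$ are exactly how this is established in the literature. The two points you flag as needing care are the right ones: (i) the Morita-theoretic input that an invertible bimodule is finitely generated projective of rank one on each side, which is what makes $End_A(M)\cong A$ and forces $\lambda_M$ to be bijective rather than merely a ring endomorphism (bijectivity follows because the inverse bimodule $N$ induces $\lambda_N$ with $\lambda_M\lambda_N = \id$ up to the convention choice); and (ii) the left/right conventions governing whether $\phi$ is a homomorphism or an antihomomorphism and whether the resulting action of $Aut_k(A)$ on $Pic_{com}(A)$ is pullback or its inverse. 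Neither affects the isomorphism type of the semidirect product. One minor addition worth making explicit: the $k$-linearity of $\lambda_M$ comes from the standing assumption that the bimodules are $k$-central, which is what puts $\lambda_M$ in $Aut_k(A)$ rather than the full ring automorphism group.
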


\subsection{Useful $Pic_{com}$ Groups}

Because we are considering cluster algebras, the $\mathbb{Q}$ algebras will be very special. They will be very similar to simple polynomial and Laurent polynomial rings.

\begin{theorem}[1.6 of \cite{Weibel} \label{polPic}]

\begin{eqnarray*}
Pic_{com} (\mathbb{Q}[t_1^\pm \cdots t_m^\pm , x_1 \cdots x_n]) = 0
\end{eqnarray*}

\end{theorem}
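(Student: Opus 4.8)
The plan is to recognize the ring as a unique factorization domain and then invoke the vanishing of the Picard group for such rings, so that essentially no direct computation with invertible modules is needed.

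First I would set $R_0 := \mathbb{Q}[t_1,\ldots,t_m,x_1,\ldots,x_n]$ and observe that, as an ordinary polynomial ring over the field $\mathbb{Q}$, it is a UFD by Gauss's theorem together with induction on the number of variables. The ring in the statement is exactly the localization $S^{-1}R_0$, where $S$ is the multiplicative set generated by $t_1,\ldots,t_m$; inverting these generators is precisely what turns the polynomial variables $t_i$ into the Laurent variables $t_i^{\pm}$. Since a localization of a UFD is again a UFD (the primes of $S^{-1}R_0$ are those primes of $R_0$ meeting $S$ trivially, and unique factorization passes to the localization), the ring $R := S^{-1}R_0 = \mathbb{Q}[t_1^{\pm}\cdots t_m^{\pm},x_1\cdots x_n]$ is a Noetherian UFD, and in particular an integrally closed (normal) domain.

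The final step is the standard fact that $Pic_{com}(R)=0$ for any UFD. For a normal Noetherian domain there is an injection $Pic(R)\hookrightarrow Cl(R)$ into the divisor class group, sending an invertible module — realized as a rank-one reflexive fractional ideal — to its Weil divisor class. A Noetherian domain is a UFD exactly when it is normal and $Cl(R)=0$; hence $Cl(R)=0$ here, and the injection forces $Pic_{com}(R)=0$.

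The only points that need care, rather than a genuine obstacle, are verifying the hypotheses that make $Pic\hookrightarrow Cl$ legitimate (Noetherian and integrally closed) and that localization preserves the UFD property; both are routine once $R$ is identified as above. As an alternative route one could instead appeal to homotopy invariance of the Picard functor on regular rings in the style of Bass--Heller--Swan and Traverso--Swan: $\mathbb{Q}$ is regular, adjoining polynomial or Laurent variables preserves regularity, and $Pic$ is homotopy invariant on regular rings, so $Pic(R)=Pic(\mathbb{Q})=0$. I expect the UFD argument to be the cleanest to write down.
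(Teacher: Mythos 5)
Your argument is correct, but it is not the route the paper takes. The paper's ``proof'' is essentially a citation: it invokes Theorem 1.6 of Weibel's work on Picard groups of Laurent/polynomial extensions (the machinery of $NPic$, $LPic$ and seminormality that reappears in the paper's Theorem~\ref{Weibel}), and its only added content is the remark that the vanishing holds over any zero-dimensional base ring, so that taking $\mathbb{Q}$ is ``overkill'' while $\mathbb{Z}$ would genuinely fail. Your proof is self-contained and more elementary: realize the ring as the localization $S^{-1}\mathbb{Q}[t_1,\ldots,t_m,x_1,\ldots,x_n]$ at the multiplicative set generated by the $t_i$, note that a localization of a UFD is a UFD, and conclude via $Pic \hookrightarrow Cl = 0$ for a Noetherian normal domain with trivial divisor class group. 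Each step you flag (Noetherianity, normality, preservation of the UFD property under localization) is indeed routine, and the alternative via homotopy invariance of $Pic$ on regular rings is equally valid. What your approach buys is a clean, citation-free proof for any field, or indeed any UFD, as base. What it does not buy is the extra generality the paper cares about: the UFD argument collapses for non-domain bases such as $\mathbb{Z}_n$, whereas Weibel's contracted-functor/seminormality analysis is exactly what extends the vanishing to all zero-dimensional commutative base rings, which is the point of the paper's remark and of its later use of $LPic$ and $NPic$ over $\mathbb{Z}$.
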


\begin{proof}
In fact true for any zero dimensional commutative ring so field is overkill. $\mathbb{Z}$ would not work because it has Krull dimension 1, but finite rings $\mathbb{Z}_n$ would.\\
\end{proof}

More generally we get

\begin{theorem}[\cite{Weibel}\label{Weibel}]
\begin{eqnarray*}
Pic_{com} (A[t_1^\pm \cdots t_m^\pm ]) \simeq Pic_{com} (A) \bigoplus \bigsqcup_{i=1}^m LPic_{com}(A) \bigoplus \bigsqcup_{k=1}^m \bigsqcup_{i=1}^{2^k {m}\choose{k}} N^k Pic_{com}(A)\\
\end{eqnarray*}

$LPic_{com}(A)$ for an anodal 1 dimensional domain is trivial. Anodal means that if $b \in \bar{A}$ the integral closure and $b^3 - b^2$ and $b^2 - b$ are in $A$, then $b \in A$ as well.

$NPic_{com}(A) \equiv Pic_{com}(A[t])/Pic_{com}(A)$. For example, $NPic_{com}(A) =0$ if and only if $A_{red}$ is seminormal. Normal rings like $U[x_1 \cdots x_n]$ for a UFD $U$ are a fortiori seminormal.

\end{theorem}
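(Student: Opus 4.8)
The plan is to deduce the displayed decomposition from Bass's fundamental theorem for a \emph{contracted functor}, applied to $F=Pic_{com}$, and then to read off the two vanishing clauses from the seminormality and anodality criteria. The single indispensable input, which I would cite rather than reprove, is Weibel's theorem \cite{Weibel} that $Pic_{com}$ is a contracted functor on commutative rings. Unwinding that statement, it supplies exactly the data needed in one variable: the augmentation $t\mapsto 0$ splits $Pic_{com}(A)\to Pic_{com}(A[t])$, defining $NPic_{com}(A)$ as the complement so that $Pic_{com}(A[t])\cong Pic_{com}(A)\oplus NPic_{com}(A)$; and the Milnor square gluing $A[t]$ and $A[t^{-1}]$ over $A[t,t^{-1}]$ yields a split Units--Pic Mayer--Vietoris sequence whose cokernel is by definition $LPic_{com}(A)$. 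Together these give the one-variable fundamental theorem
\begin{equation*}
Pic_{com}(B[t^{\pm}]) \;\cong\; Pic_{com}(B)\oplus LPic_{com}(B)\oplus NPic_{com}(B)^{\oplus 2},
\end{equation*}
valid for every commutative ring $B$.

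For the $m$-variable statement I would iterate this identity, peeling off $t_m,t_{m-1},\dots,t_1$ in turn. The structural point, again from the contracted-functor formalism, is that $L\equiv LPic_{com}$ and $N\equiv NPic_{com}$ are themselves natural operations that commute, $NL\cong LN$, and that for $Pic_{com}$ the contraction is homotopy invariant and contraction-acyclic (this is part of Weibel's analysis), giving the relations $L^{2}\cong 0$ and $LN\cong NL\cong 0$; consequently only the pure powers $N^{k}$ survive alongside a single $L$. Writing the one-variable step as the operator $\Phi=\mathrm{id}+L+2N$ on the relevant contracted functors, iteration produces $\Phi^{m}(Pic_{com})$, and expanding modulo the above relations gives
\begin{equation*}
(\mathrm{id}+L+2N)^{m} \;\equiv\; \mathrm{id}\;+\;m\,L\;+\;\sum_{k=1}^{m}2^{k}\binom{m}{k}\,N^{k}.
\end{equation*}
The term $\mathrm{id}$ is the single copy of $Pic_{com}(A)$; the $m$ monomials carrying exactly one $L$ (none can be paired with an $N$ or a second $L$) give the $m$ copies of $LPic_{com}(A)$; and choosing which $k$ of the $m$ factors contribute $2N$ gives $\binom{m}{k}$ selections each weighted by $2^{k}$, i.e.\ the claimed $2^{k}\binom{m}{k}$ copies of $N^{k}Pic_{com}(A)$. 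This reproduces the asserted direct-sum decomposition.

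It remains to justify the two vanishing clauses. For the nil part I would invoke the Traverso--Swan theorem: $NPic_{com}(A)=0$ if and only if $A_{red}$ is seminormal, equivalently $Pic_{com}(A)=Pic_{com}(A[t])$. In particular a UFD $U$ is integrally closed, $U[x_1\cdots x_n]$ is again a UFD by Gauss's lemma, hence normal and a fortiori seminormal, so all $N^{k}Pic_{com}$ vanish for such rings. For the contraction, the analogous statement in Weibel's framework is that $LPic_{com}(A)=0$ once $A$ is anodal, anodality being the stated lifting condition on elements $b$ of the integral closure with $b^{2}-b,\,b^{3}-b^{2}\in A$; specializing to an anodal one-dimensional domain gives $LPic_{com}(A)=0$, the second clause.

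The \textbf{main obstacle} is concentrated entirely in the cited input of the first step: establishing that $Pic_{com}$ genuinely satisfies the contracted-functor axioms, and identifying its contraction $LPic_{com}$ together with the acyclicity and homotopy-invariance relations $L^{2}\cong LN\cong NL\cong 0$ that force the iterated formula to collapse to first order in $L$. That analysis rests on the Units--Pic sequence and a study of $\mathbb{G}_m$ on $\mathbb{A}^1\setminus\{0\}$ over $A$, and is precisely where the seminormality threshold (for $N$) and the anodality threshold (for $L$) enter; once it is in hand, the iteration and the seminormality bookkeeping above are routine.
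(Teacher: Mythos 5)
The paper offers no proof of this statement at all --- it is imported verbatim from Weibel's work (the theorem environment carries only the citation), so there is nothing internal to compare your argument against. On its own terms, your derivation is the standard and correct one: the one-variable fundamental theorem $Pic_{com}(B[t^{\pm}])\cong Pic_{com}(B)\oplus LPic_{com}(B)\oplus NPic_{com}(B)^{\oplus 2}$ for the contracted functor $Pic_{com}$, iterated over $t_m,\dots,t_1$, does yield the displayed multiplicities $1$, $m$, and $2^{k}\binom{m}{k}$ once the cross terms are killed, and your bookkeeping with $(\mathrm{id}+L+2N)^{m}$ is exactly right. Two points deserve to be made explicit rather than left implicit. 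First, the iteration is only legitimate because Bass's formalism guarantees that $NF$ and $LF$ are again contracted functors whenever $F$ is, so the fundamental theorem can be re-applied to them; you use this silently when you expand $NF(A[t^{\pm}])$ and $LF(A[t^{\pm}])$. Second, the relations $L^{2}Pic_{com}=0$ and $NLPic_{com}=LNPic_{com}=0$ are not formal consequences of being a contracted functor; they follow from Weibel's identification of $LPic_{com}$ with $H^{1}_{et}(-,\mathbb{Z})$, which is homotopy invariant and has trivial contraction. You correctly flag this as the indispensable cited input, and likewise defer the Traverso--Swan seminormality criterion for $NPic_{com}$ and the anodality criterion for $LPic_{com}$ of one-dimensional domains to the literature, which is the same level of rigor the paper itself adopts. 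In short: the proposal is sound, and it supplies the derivation the paper omits.
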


\begin{theorem}[2.2 of \cite{Coykendall} \label{ZpolPic}]
$Pic_{com} R \simeq Pic_{com} R[x_1 \cdots x_n]$ if and only if $R$ is seminormal. For example,
\begin{eqnarray*}
Pic_{com} \mathbb{Z} [x_1 \cdots x_n] &\simeq& 0\\
\end{eqnarray*}

\end{theorem}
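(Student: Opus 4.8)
The plan is to reduce the multivariable assertion to the single-variable case already packaged in \cref{Weibel}, and then to read off the example. Recall that by definition $NPic_{com}(R) = Pic_{com}(R[t])/Pic_{com}(R)$, and that \cref{Weibel} identifies the vanishing $NPic_{com}(R) = 0$ with seminormality of $R_{red}$. Hence the single-variable statement $Pic_{com}(R) \simeq Pic_{com}(R[x])$ holds exactly when $R$ is seminormal; this is the $n = 1$ case (interpreting seminormality in the reduced sense, which is harmless for the intended application since $\mathbb{Z}$ is a reduced domain).

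For the forward implication with $n$ variables I would argue by induction on $n$, writing $R[x_1 \cdots x_n] = (R[x_1 \cdots x_{n-1}])[x_n]$. The single-variable case gives $Pic_{com}(R[x_1 \cdots x_{n-1}]) \simeq Pic_{com}(R[x_1 \cdots x_n])$ \emph{provided} the coefficient ring $R[x_1 \cdots x_{n-1}]$ is itself seminormal. The key input making the induction go through is therefore the stability of seminormality under adjoining a polynomial variable, i.e.\ $R$ seminormal $\Rightarrow R[x]$ seminormal (Gilmer--Heitmann/Swan). Granting that, every ring in the tower $R \subset R[x_1] \subset \cdots \subset R[x_1 \cdots x_n]$ stays seminormal and each inclusion induces an isomorphism on $Pic_{com}$, so composing yields $Pic_{com}(R) \simeq Pic_{com}(R[x_1 \cdots x_n])$.

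For the converse I would exploit the splitting coming from the augmentation. The inclusion $R[x_1] \hookrightarrow R[x_1 \cdots x_n]$ and the evaluation $x_2 = \cdots = x_n = 0$ compose to the identity on $R[x_1]$, so $Pic_{com}(R[x_1])$ is a direct summand of $Pic_{com}(R[x_1 \cdots x_n])$, compatibly with the further summand $Pic_{com}(R)$. If $R$ were not seminormal then $NPic_{com}(R) = Pic_{com}(R[x_1])/Pic_{com}(R) \neq 0$ by \cref{Weibel}, and this nonzero group injects into $Pic_{com}(R[x_1 \cdots x_n])/Pic_{com}(R)$, contradicting the assumed isomorphism. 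Thus the isomorphism for some (equivalently every) $n \geq 1$ forces $R$ seminormal.

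Finally the example: $\mathbb{Z}$ is integrally closed in $\mathbb{Q}$, hence normal and a fortiori seminormal, and $Pic_{com}(\mathbb{Z}) = 0$ since $\mathbb{Z}$ is a PID. Applying the theorem gives $Pic_{com}(\mathbb{Z}[x_1 \cdots x_n]) \simeq Pic_{com}(\mathbb{Z}) = 0$; one can also see this directly because $\mathbb{Z}[x_1 \cdots x_n]$ is a Noetherian UFD, so $Pic$ injects into its trivial divisor class group. The hard part will be the stability of seminormality under polynomial extension invoked in the induction; everything else is formal manipulation of the $NPic$ quotient of \cref{Weibel} together with the augmentation splitting.
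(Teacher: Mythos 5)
Your proposal is correct, but note that the paper itself offers no proof of this statement --- it is quoted verbatim as Theorem 2.2 of \cite{Coykendall}, so there is no internal argument to compare against. Your reconstruction is the standard one and it works: the single-variable case is exactly the content of the $NPic_{com}$ clause of \cref{Weibel} (Traverso--Swan: $NPic_{com}(R)=0$ iff $R_{red}$ is seminormal), the forward induction on $n$ is legitimate once you import the nontrivial fact that seminormality is stable under adjoining a polynomial indeterminate (Gilmer--Heitmann, Swan), and the converse via the retraction $R[x_1]\hookrightarrow R[x_1\cdots x_n]\twoheadrightarrow R[x_1]$ correctly shows that a nonzero $NPic_{com}(R)$ survives as a complement to $Pic_{com}(R)$ inside $Pic_{com}(R[x_1\cdots x_n])$, so the natural map cannot be an isomorphism. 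You are also right to flag the $R$ versus $R_{red}$ issue; the clean iff as stated really concerns $R_{red}$ (or presupposes $R$ reduced, as in Coykendall's setting), which is immaterial for $\mathbb{Z}$. The example is handled correctly twice over ($\mathbb{Z}$ is normal with trivial Picard group, and independently $\mathbb{Z}[x_1\cdots x_n]$ is a Noetherian UFD so $Pic_{com}$ embeds in the trivial class group). The only honest caveat is the one you already name: the argument is not self-contained, resting on two substantial theorems from the seminormality literature rather than deriving them.
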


\subsection{Automorphisms \label{polAut}}

Here we gather useful automorphism groups for the commutative algebras that may show up as the cluster algebras. Easy ones include $Aut \mathbb{Z}[x] = \mathbb{Z}_2 \ltimes \mathbb{Z}$. However we had at the very simplest a polynomial ring in $n (l+1)$ variables. Always $\geq 2$ variables. That means a complicated automorphism group. It is even complicated for $\mathcal{C}_{A_1 , 1}$ as described below.

\begin{definition}[Jonqui`ere group]

$J_n (R)$ is defined as those automorphisms of $R[x_1 \cdots x_n]$ of the form

\begin{eqnarray*}
x_1 &\to& F_1 (x_1) = \alpha_1 x_1 + \beta \in R[x_1]\\
x_2 &\to& F_2 (x_1 , x_2) = \alpha_2 x_2 + f (x_1) \in R[x_1,x_2]\\
x_i &\to& F_i (x_1 \cdots x_i ) = \alpha_i x_i + f_i (x_1 \cdots x_{i-1}) \in R[x_1 \cdots x_i]\\
\end{eqnarray*}
\end{definition}

\begin{definition}
$Af_n (R)$ are those transformations of the form

\begin{eqnarray*}
\vec{x} &\to& A \vec{x} + \vec{b}\\
A &\in& GL_n (R)\\
\end{eqnarray*}

\end{definition}

\begin{theorem}[Jung, van der Kulk \cite{Dicks}]
The group of polynomial automorphisms of $k[x,y]$ denoted $GA_2$is generated as

\begin{eqnarray*}
GA_2 (k) &\simeq& Af_2 (k) \star_{Bf_2(k)} J_2(k)
\end{eqnarray*}

\end{theorem}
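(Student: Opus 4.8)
The plan is to prove the two assertions implicit in the amalgam statement separately: first that $Af_2(k)$ and $J_2(k)$ \emph{generate} $GA_2(k)$, and second that the canonical surjection $Af_2(k) \star_{Bf_2(k)} J_2(k) \to GA_2(k)$ is \emph{injective}, i.e.\ that the only relations among the generators are those forced by the shared subgroup $Bf_2(k) = Af_2(k) \cap J_2(k)$ of lower-triangular affine maps. Throughout I write an automorphism as $\phi = (F,G)$ with $F,G \in k[x,y]$ and set $\deg\phi = \max(\deg F, \deg G)$.

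For generation I would argue by induction on $\deg\phi$. If $\deg\phi = 1$ then $\phi \in Af_2(k)$ and we are done, so suppose $\deg\phi \ge 2$. The engine is the constancy of the Jacobian: invertibility forces $F_x G_y - F_y G_x \in k^\times$. Writing $F = F_d + (\text{lower})$ and $G = G_e + (\text{lower})$ for the leading homogeneous forms of degrees $d,e$, the top-degree part of the Jacobian is $(F_d)_x (G_e)_y - (F_d)_y (G_e)_x$, of degree $d + e - 2$; since the Jacobian is a nonzero constant this form must vanish. A vanishing Jacobian of two homogeneous polynomials in two variables forces algebraic dependence, hence both are scalar powers of a common homogeneous $H$, so that $F_d^{\,q} = c\, G_e^{\,p}$ with $\gcd(p,q)=1$ and $d/e = p/q$. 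I would then show (the delicate point) that one of $d,e$ divides the other; granting say $d \mid e$, the leading form of $G$ equals $c\,F_d^{\,e/d}$, so composing with the elementary triangular map $(x,y) \mapsto (x,\, y - c\,x^{e/d})$ cancels the top term of $G$ and strictly lowers $\deg\phi$. By induction the reduced automorphism, and hence $\phi$, lies in $\langle Af_2(k), J_2(k)\rangle$.

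For injectivity I would use the degree function as a combinatorial length. The crucial estimate is that degree is \emph{multiplicative under composition when no leading-term cancellation occurs}: if $\psi \in Af_2(k)\setminus Bf_2(k)$ and $\chi \in J_2(k)\setminus Bf_2(k)$, then the leading forms cannot align to cancel, so $\deg(\chi\circ\psi) = \deg\chi \cdot \deg\psi$, and more generally the degree of an alternating reduced word equals the product of the degrees of its letters. Now take a reduced word $w = g_1 \cdots g_n$ whose letters lie alternately in $Af_2\setminus Bf_2$ and $J_2\setminus Bf_2$; every $J_2$-letter has degree $\ge 2$ while affine letters contribute a factor $1$, so if $w$ contains at least one triangular letter then $\deg w \ge 2 > 1 = \deg\id$, whence $w \ne \id$; and an all-affine reduced word has length $\le 1$. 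This is exactly the ping-pong / Bass--Serre criterion identifying $GA_2(k)$ with the amalgam, the amalgamation being forced over $Bf_2(k)$ precisely because that subgroup is the ambiguity between an affine and a triangular normal form.

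The main obstacle is the inductive step of the generation half, and within it the divisibility claim $d \mid e$ (or $e \mid d$): the vanishing-Jacobian lemma only supplies $F_d^{\,q} = c\,G_e^{\,p}$ with $\gcd(p,q)=1$, and one must rule out the genuinely mixed case $p,q>1$ in order to guarantee that a \emph{single} elementary triangular substitution lowers the degree. Settling this requires a sharper analysis of the leading forms --- tracking the shared factor $H$ and how the subleading terms interact under composition --- and the same no-cancellation bookkeeping is what powers the degree estimate needed for injectivity. Once this degree calculus is pinned down, both halves follow.
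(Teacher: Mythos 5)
The paper does not actually prove this statement; it quotes Jung--van der Kulk from the cited reference, so there is no in-paper argument to compare against. Your outline does follow the standard route to the theorem: generation by an induction that lowers $\deg\phi$ via a triangular substitution killing the leading form of one coordinate, and injectivity of the canonical map $Af_2(k) \star_{Bf_2(k)} J_2(k) \to GA_2(k)$ via multiplicativity of degree on reduced alternating words. That is the right skeleton, and you have correctly located where the weight of the proof sits.

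But as written the proposal has two genuine gaps, and they are not peripheral --- they are the theorem. First, the divisibility claim: from the vanishing of the top-degree part of the Jacobian you only obtain that $F_d$ and $G_e$ are proportional to powers of a common homogeneous form $H$, say $F_d = aH^{d/h}$ and $G_e = bH^{e/h}$ with $h = \deg H$; concluding that $d \mid e$ or $e \mid d$ (so that a \emph{single} elementary triangular substitution cancels a leading term) is precisely Jung's hard step, and you defer it with ``I would then show.'' Without it the induction does not close. Second, the estimate $\deg(g_1\cdots g_n) = \prod_i \deg g_i$ for reduced words is asserted via ``the leading forms cannot align to cancel,'' but that no-cancellation statement is exactly the content of the amalgamated-product half and requires the same leading-form bookkeeping you postponed; asserting it is assuming the conclusion. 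A smaller but real issue: the step ``vanishing Jacobian of homogeneous forms forces algebraic dependence'' is a characteristic-zero fact (in characteristic $p$ the pair $x^p, y$ has singular Jacobian matrix yet is algebraically independent), whereas the theorem as stated holds over an arbitrary field $k$; van der Kulk's argument for general $k$ must avoid the Jacobian criterion. So: right strategy, correctly diagnosed difficulty, but the two steps you flag as ``to be settled'' are the entire mathematical content, and the proof is not yet there.
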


\begin{lemma}[\cite{Essen} \label{ZpolAut}]
For an integral domain $D$, the tame subgroup of $Aut D[x,y]$ is the subgroup generated as an amalgamated product $Af_2$ and $J_2$. However this is a proper subgroup, because there exists non-tame Nagata automorphisms like the following for all $a \neq 0$ non-unit:

\begin{eqnarray*}
X &\to& X + a(aY - X^2)\\
Y &\to& Y + 2X(aY - X^2) + a (aY - X^2)^2\\
\end{eqnarray*}

As such we get more complications on $\mathbb{Q}[x_1 \cdots x_n]$ by letting $D_2=\mathbb{Q}[x_1 \cdots x_{n-2}]$. Or on $\mathbb{Z}[x_1 \cdots x_n]$ by letting $D_2 = \mathbb{Z}[x_1 \cdots x_{n-2}]$

\end{lemma}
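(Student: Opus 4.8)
The plan is to treat the two assertions separately: the amalgamated product structure of the tame subgroup, which is the Jung--van der Kulk theorem (stated above) transported from a field to a domain $D$, and the non-tameness of the Nagata automorphism $\sigma$, which is where the hypothesis that $a$ is a non-unit does all the work.

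For the first assertion, recall that the tame subgroup is \emph{by definition} the subgroup of $Aut\,D[x,y]$ generated by $Af_2(D)$ and $J_2(D)$; the content is that it is the \emph{free} amalgamated product $Af_2(D)\star_{Bf_2(D)}J_2(D)$ over $Bf_2(D)=Af_2(D)\cap J_2(D)$, i.e.\ that the only relations are the amalgamation relations. First I would introduce a degree $\deg\phi=\deg\phi_1+\deg\phi_2$ of an endomorphism $\phi=(\phi_1,\phi_2)$ and observe that, because $D$ is a domain, leading forms multiply without cancellation, so the leading-term bookkeeping of van der Kulk's original argument goes through verbatim. Then I would run the reduction: any tame $\phi$ that is not affine admits an elementary map $\tau$ with $\deg(\tau\phi)<\deg\phi$, and iterating produces a reduced alternating word. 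Finally I would package uniqueness of this reduced form through Bass--Serre theory, letting the tame group act on the associated tree with vertex stabilisers $Af_2(D)$ and $J_2(D)$ and edge stabiliser $Bf_2(D)$; acyclicity of the tree is exactly the normal-form statement, giving the amalgam. The domain hypothesis is used precisely in the no-cancellation step and is essential: with zero divisors the degree is no longer additive and the normal form collapses.

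For the second assertion I would first verify that $\sigma$ is an automorphism. Writing $\Delta=aY-X^2$, a direct substitution gives $\sigma(\Delta)=a\sigma(Y)-\sigma(X)^2=\Delta$, so $\sigma$ fixes $\Delta$; the inverse is then the reverse shear $X\mapsto X-a\Delta$, $Y\mapsto Y-2X\Delta+a\Delta^2$, as one checks by the same substitution. To see that $\sigma$ is not tame over $D$, pass to the fraction field $K=\mathrm{Frac}(D)$, in which $a$ is invertible. Over $K$ the pair $(X,\Delta)$ is a coordinate system, since $Y=a^{-1}(\Delta+X^2)$, and in these coordinates $\sigma$ is an elementary shear. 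Concretely $\sigma=\rho^{-1}S\rho$ with $S\colon Y\mapsto Y,\ X\mapsto X+aY$ triangular and $\rho\colon X\mapsto X,\ Y\mapsto a^{-1}(Y+X^2)$, as one verifies directly; all three factors lie in $Aut\,K[x,y]$, so $\sigma$ is tame over $K$ with a definite reduced word in $Af_2(K)\star_{Bf_2(K)}J_2(K)$.

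The hard part, and the crux of the argument, is to convert this into non-tameness over $D$. If $\sigma$ were tame over $D$ it would be an alternating product of elements of $Af_2(D)$ and $J_2(D)$; including $Aut\,D[x,y]\hookrightarrow Aut\,K[x,y]$ and invoking uniqueness of reduced words in the $K$-amalgam, this $D$-rational word would have to reduce to the word for $\rho^{-1}S\rho$ found above. Tracking the affine pieces through this identification forces the scalar $a^{-1}$ to appear as an entry of a $D$-rational matrix, i.e.\ $a^{-1}\in D$, contradicting that $a$ is a non-unit; this is the step I would expect to be most delicate, and it is the one carried out in \cite{Essen}. Finally, the stated consequences are immediate: taking $D=D_2=\mathbb{Q}[x_1\cdots x_{n-2}]$ (respectively $\mathbb{Z}[x_1\cdots x_{n-2}]$), which is a domain, and any non-unit $a\in D_2$ such as $a=x_1$, the construction produces wild automorphisms of $D_2[x_{n-1},x_n]$ and hence of the whole polynomial ring.
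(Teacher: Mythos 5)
The paper does not actually prove this lemma; it is imported wholesale from \cite{Essen}, so there is no in-paper argument to compare against. Judged on its own terms, your reconstruction of the second half (Nagata is an automorphism because it fixes $\Delta = aY - X^2$; it is tame over $K=\mathrm{Frac}(D)$ as a conjugate of a shear; non-tameness over $D$ comes from uniqueness of reduced words in the $K$-amalgam forcing $a^{-1}\in D$) is the standard architecture and is sound in outline, though you rightly flag that the normal-form comparison is the delicate step and you defer it to the reference rather than carrying it out.

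The first half, however, contains a genuine gap. You claim that van der Kulk's degree-reduction argument ``goes through verbatim'' over a domain because leading forms multiply without cancellation. It does not: the reduction step (killing the top-degree part of $\phi_2$ by subtracting a polynomial in $\phi_1$) requires dividing by the leading coefficient of $\phi_1$, which is only possible when $D$ is a field. If the reduction did run verbatim over $D$, it would prove that \emph{every} automorphism of $D[x,y]$ admits a degree-lowering elementary map and hence is tame --- flatly contradicting the wildness of the Nagata automorphism that the same lemma asserts. Restricting the reduction to ``tame $\phi$'' does not rescue it, since tameness of $\phi$ does not by itself guarantee that the required elementary map has coefficients in $D$ at each stage. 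The correct route to the amalgam structure of the tame subgroup is the one you use only for the Nagata half: embed $T_2(D)=\langle Af_2(D), J_2(D)\rangle$ into $GA_2(K)=Af_2(K)\star_{Bf_2(K)}J_2(K)$ and check that a reduced alternating word over $D$ remains reduced over $K$, which follows from $Af_2(D)\cap J_2(K)=Bf_2(D)$ and $J_2(D)\cap Af_2(K)=Bf_2(D)$ (an affine map over $D$ that is triangular over $K$ is already triangular over $D$, and conversely). With that substitution the normal-form/Bass--Serre packaging you describe is fine; without it the first assertion is not established.
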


There are many interesting subgroups inside $GA_{n}(\mathbb{Q})$. In particular inside the $GL_n (\mathbb{Q})$ subgroup, we have:

\begin{theorem}[\label{Feit} \cite{Friedland}]
Finite subgroups of $GL_n \mathbb{Q}$ with maximal order are characterized. Except for $2,4,6,7,8,9,10$ they are the orthogonal groups which have $2^n n!$.

\begin{itemize}
\setlength\itemsep{-1em}
\item $n=2$ has a $W(G_2)$ with order $12 > 8$.\\
\item $n=4$ has a $W(F_4)$ with order $1152 > 384$\\
\item $n=6$ has a $W(E_6) \times \mathbb{Z}_2$ with order $103680 > 46080$\\
\item $n=7$ has a $W(E_7)$ with order $2,903,040 > 645,120$\\
\item $n=8$ has a $W(E_8)$ with order $696,729,600 > 10,321,920$\\
\item $n=9$ has a $W(E_8) \times W(A_1)$ with order $1,393,459,200 > 185,794,560$\\
\item $n=10$ has a $W(E_8) \times W(G_2)$ with order $8,360,755,200 > 3,715,891,200$\\
\end{itemize}

When one further demands matrices over the natural numbers with natural number inverse, there are simply the permutation matrices.

\end{theorem}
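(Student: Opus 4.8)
The plan is to reduce the maximal-order problem to the classification of isometry groups of positive-definite lattices, and then to compare Weyl-group orders dimension by dimension. First I would observe that any finite subgroup $G \subset \GL_n(\mathbb{Q})$ preserves the averaged positive-definite symmetric form $\langle \cdot , \cdot \rangle = \sum_{g \in G} g^\top g$ (invariance is immediate from $h^\top(\sum_g g^\top g)h = \sum_{g'} g'^\top g'$) and the averaged lattice $L = \sum_{g \in G} g\,\mathbb{Z}^n$. After conjugation $G$ therefore sits inside the finite isometry group $\Aut(L) = O(L)$ of a rank-$n$ lattice, so determining the maximal order amounts to finding the lattice whose isometry group is largest. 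The standard cubic lattice $\mathbb{Z}^n$ with the dot product has isometry group the signed permutation group $(\mathbb{Z}/2)^n \rtimes S_n = W(B_n)$ of order $2^n n!$, which supplies the baseline candidate and the generic answer.

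Next I would bound competitors. Minkowski's congruence argument—reduction of $G \subset \GL_n(\mathbb{Z})$ modulo an odd prime is injective on its $p'$-part—caps $|G|$ by an explicit product of local factors and already forces $|G| = 2^n n!$ outside finitely many small obstructions. To pin down the exact maximum I would use that a lattice whose isometry group strictly exceeds $W(B_n)$ must carry an exceptionally rich set of minimal vectors; the sublattice these vectors generate is a root lattice, and the subgroup of $O(L)$ generated by the associated reflections is a Weyl group $W(R)$. The comparison then reduces to the orders of the irreducible Weyl groups, each extended by $-\mathrm{id}$ in the cases where $-\mathrm{id} \notin W(R)$.

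I would then exhibit the exceptional lattices and compute. In dimensions $2,4,6,7,8$ the root lattices $A_2, F_4, E_6, E_7, E_8$ have isometry groups $W(G_2)$ of order $12$, $W(F_4)$ of order $1152$, $W(E_6)\times\{\pm 1\}$ of order $51840 \cdot 2 = 103680$ (since $-\mathrm{id}\notin W(E_6)$), $W(E_7)$ of order $2903040$, and $W(E_8)$ of order $696729600$; in dimensions $9,10$ the orthogonal sums $E_8 \oplus A_1$ and $E_8 \oplus G_2$ give $W(E_8)\times W(A_1)$ and $W(E_8)\times W(G_2)$. A direct comparison shows each of these overtakes $2^n n!$ precisely in the listed dimensions (for instance $2^8 \cdot 8! = 10321920 < 696729600$), while in every other dimension no lattice beats the cube. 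This last exclusion is the genuinely hard step and the real content of the theorem: proving that outside the seven exceptional dimensions no lattice, root or otherwise, has isometry group exceeding $W(B_n)$. It requires the delicate interplay of the Minkowski local bounds with the classification of lattices carrying many short vectors, and this is where I expect the main obstacle to lie.

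Finally, for the closing assertion I would argue directly that a matrix $A$ with $A$ and $A^{-1} = B$ both having entries in $\mathbb{N}$ is a permutation matrix. The off-diagonal relations $\sum_k a_{ik}b_{kj} = 0$ for $i \neq j$ force $a_{ik}b_{kj} = 0$ for all $k$ by nonnegativity, so the support sets $S_i = \{k : a_{ik}>0\}$ and $T_j = \{k : b_{kj}>0\}$ satisfy $S_i \cap T_j = \emptyset$ whenever $i \neq j$; since the columns of the invertible $A$ are nonzero, each index $k$ lies in exactly one $T_j$, and the diagonal relations $\sum_k a_{ik}b_{ki} = 1$ then force each $T_j$ to be a singleton with both relevant entries equal to $1$. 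Hence $B$, and therefore $A$, is a permutation matrix. Thus within the maximal finite groups above, the elements with nonnegative integer entries and nonnegative integer inverse are exactly the permutation matrices $S_n$.
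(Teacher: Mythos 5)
The paper offers no proof of this statement at all: it is quoted verbatim from the cited reference (Friedland, building on Feit and Weisfeiler), so there is no internal argument to compare yours against. Your framework is the right one as far as it goes --- averaging the form and the lattice to conjugate $G$ into $\Aut(L)\subset GL_n(\mathbb{Z})$ for a positive-definite lattice $L$ is correct, the orders you list for $W(G_2),W(F_4),W(E_6)\times\{\pm 1\},W(E_7),W(E_8)$ and the sums $E_8\oplus A_1$, $E_8\oplus G_2$ all check out against $2^n n!$, and your closing argument that a matrix with nonnegative integer entries and nonnegative integer inverse is a permutation matrix is complete and correct (the diagonal relation $\sum_k a_{ik}b_{ki}=1$ plus the disjointness of supports does force monomial structure with entries $1$).

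However, the two steps you lean on for the exclusion both fail as stated. First, Minkowski's congruence bound does not ``already force $|G|=2^n n!$ outside finitely many small obstructions'': the Minkowski bound $M(n)$ is strictly larger than $2^n n!$ in \emph{every} dimension (already $M(2)=24>12>8$), so it cannot isolate the exceptional $n$ by itself. Second, the proposed mechanism --- that a lattice with $|\Aut(L)|>2^n n!$ has minimal vectors generating a root lattice whose Weyl group controls $\Aut(L)$ --- is false in general: automorphism groups of lattices need not be reflection groups nor be generated by reflections in minimal vectors (the Coxeter--Todd lattice $K_{12}$, the Barnes--Wall lattice in dimension $16$, and the Leech lattice are standard counterexamples), and a maximal finite subgroup of $GL_n(\mathbb{Z})$ need not stabilize any root lattice. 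The genuine content of the theorem is precisely the uniform upper bound $|G|\leq 2^n n!$ for all $n$ outside the listed exceptions, and the only known proof goes through Weisfeiler's bounds on primitive finite linear groups, which rest on the classification of finite simple groups; no elementary lattice-theoretic route of the kind you sketch is available. You flag this step as the ``main obstacle,'' which is honest, but it means the proposal establishes the easy inequalities (the exceptional groups exceed $2^n n!$ where claimed) and the final permutation-matrix assertion, while the central claim remains unproved.
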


\subsection{Non-invertible (bi)modules}

To characterize not necessarily invertible (bi)modules over cluster algebras, we consider coherent sheaves on either $Spec R$ and $Spec R \times_{Spec \mathbb{Q}} Spec R$. However we are only looking at specific (bi)modules, not accounting for maps between (bi)modules. In the discretized 1+1 picture of the Bethe ansatz, (bi)module morphisms would happen at codimension 2 at points on the walls. That means we should ignore the morphisms in these categories and only look at isomorphism classes of objects or even $K_0$ (split or ordinary).

\begin{example}
Consider a polynomial ring $\mathbb{Q}[x_1 \cdots x_n]$. If we restrict our attention to finitely generated projectives, we get that they are all free by Quillen-Suslin \cite{Lam}. This is similarly useful for bimodules by taking $\mathbb{Q}[x_1 \cdots x_n , y_1 \cdots y_n]$. That is $K_0 (\mathbb{Q}[x_1, \cdots x_n]) = \mathbb{Z}$
\end{example}

\section{$Pic$ and $K_0$ of these cluster algebras}

Now let us put all the pieces together to caluclate the associated $Pic$ and $Pic_{com}$ for both $K_0 \mathcal{C}_l \otimes \mathbb{Q}$ and it's exchangeable parts. We can get some information about the noninvertible parts via some invariants related to $K_0$.

\begin{example}[$A_1$]
This has 2 clusters $x$ and $w/x$ and the full cluster algebra is Laurent polynomials $\mathbb{Q}[x^\pm]$. This has a $\mathbb{Z}_2$ for automorphisms over $\mathbb{Q}$ and a trivial $Pic_A (A)$. This means we have a $\mathbb{Z}_2$ invariant. This is the cluster algebra that shows up as the principal part for $Q_{A_1,1}$. $\mathcal{A}$ itself is a polynomial ring in $n*(l+1)=2$ generators. It's automorphisms and Picard group are covered by \cref{polAut} and \cref{polPic}. That means our coarse invariant for invertible $\mathcal{C}_{A_1,1}$ bimodule categories is valued in $Aut_{\mathbb{Q}-alg} (\mathcal{A}) \ltimes Pic_{com} (\mathcal{A}) \simeq GA_2 (\mathbb{Q})$

\begin{tikzcd}
Pic_{com} ( \mathcal{A}) = 0 \arrow[d] & & Aut_{cl}( \mathcal{A} ) \arrow[d,hookrightarrow] \arrow[r,hookrightarrow] & Aut_{\mathbb{Q}-alg} (\mathcal{A}) = GA_2 (\mathbb{Q}) \\
Pic_{com} ( \mathcal{A}^{ex}) = 0 & & Aut_{cl}( \mathcal{A}^{ex} ) = \mathbb{Z}_2 \arrow[r,"\simeq"] & Aut_{\mathbb{Q}-alg} (\mathcal{A}^{ex}) \\
\end{tikzcd}

\end{example}

\begin{example}[$A_2$]
This example is more interesting . It has $(x_1,x_2)$ and 4 other clusters. Altogether $\mathcal{A} \subset \mathbb{Q}[x_1^\pm , x_2^\pm] \subset \mathbb{Q}(x_1,x_2)$. It is $\mathbb{Q}[x_1 , \frac{1+x_1}{x_2} , \frac{1+x_2}{x_1}] \simeq \mathbb{Q}[u,v,w]/(uvw-u-v-1)$ for which $\text{Spec} \mathcal{A}$ which cuts a 2 dimensional hypersurface in affine 3 space. Closure in projective space as $uvw - uz^2 - vz^2 - z^3$ which is a cubic in $P^3$.\\
As above it's cluster automorphisms contain $D_5$, but it's Picard group involves some more algebraic geometry of this cubic.\\
This quiver shows up as the principal part of $\mathcal{A}_{A_1,2}$ and $\mathcal{A}_{A_2,1}$. The $\mathcal{A}$ for these are polynomial rings in $3$ and $4$ generators respectively. Our coarse invariant for bimodule categories for these is valued in $Aut_{\mathbb{Q}-alg} (\mathcal{A}) \ltimes Pic_{com} (\mathcal{A})$ where

\begin{tikzcd}
Pic_{com} ( \mathcal{A}) = 0 \arrow[d] & & Aut_{cl}( \mathcal{A} ) \arrow[d,hookrightarrow] \arrow[r,hookrightarrow] & Aut_{\mathbb{Q}-alg} (\mathcal{A}) = GA_{3/4} (\mathbb{Q}) \\
Pic_{com} ( \frac{\mathbb{Q}[u,v,w]}{(uvw-u-v-1)} ) & & Aut_{cl}( \mathcal{A}^{ex} ) = D_5 \arrow[r,hookrightarrow] & Aut_{\mathbb{Q}-alg} (\mathcal{A}^{ex}) \\
\end{tikzcd}

The difference for both cases is indicated with the $3/4$ and $Aut_{cl} (\mathcal{A})$ may differ between cases.

\end{example}

\begin{example}[$A_3$].\\
The next $A_3$ has $14$ clusters that altogether form $\mathbb{Q}[x_1 , x_3 , \frac{1+x_2}{x_1}=w , \frac{1+x_2+x_1 x_3}{x_2 x_3}=t]$. This can be written as $\mathbb{Q}[x_1 , x_3 , w, t]$ quotiented by a single relation $t w x_1 x_3 - t x_3 - w x_1 - x_1 x_3 = 0$. Upon projectivization of the associated affine scheme, it becomes a quartic hypersurface in $P^4$. This gives the cluster algebra $\mathcal{A}^{ex}$.\\
.\\It's cluster automorphisms contain $D_6$ and computing the Picard group requires more algebraic geometry.\\
This quiver shows up as the principal part of $\mathcal{A}_{A_1,3}$ and $\mathcal{A}_{A_3,1}$. With the frozen part, polynomial rings in $4$ or $6$ variables. Our coarse invariant for bimodule categories for these is valued in $Aut_{\mathbb{Q}-alg} (\mathcal{A}) \ltimes Pic_{com} (\mathcal{A})$ where

\begin{tikzcd}
Pic_{com} ( \mathcal{A}) = 0 \arrow[d] & & Aut_{cl}( \mathcal{A} ) \arrow[d,hookrightarrow] \arrow[r,hookrightarrow] & Aut_{\mathbb{Q}-alg} (\mathcal{A}) = GA_{4/6} (\mathbb{Q}) \\
Pic_{com} ( \frac{\mathbb{Q}[x_1 , x_3 , w, t]}{(t w x_1 x_3 - t x_3 - w x_1 - x_1 x_3)} ) & & Aut_{cl}( \mathcal{A}^{ex} ) = D_6 \arrow[r,hookrightarrow] & Aut_{\mathbb{Q}-alg} (\mathcal{A}^{ex}) \\
\end{tikzcd}

The differences are indicated as above.

\end{example}

\begin{proof}
\begin{eqnarray*}
w x_1 - 1 &=& x_2\\
t x_2 x_3 &=& t (w*x_1 - 1) x_3 = 1 + x_2 + x_1 x_3\\
&=& w x_1 + x_1  x_3\\
t (w x_1 - 1)x_3 &=& w x_1 + x_1  x_3\\
t w x_1 x_3 - t x_3 - w x_1 - x_1 x_3 &=& 0\\
t w x_1 x_3 - t x_3 z^2 - w x_1 z^2 - x_1 x_3 z^2 &=& 0\\
\end{eqnarray*}

\end{proof}

More detailed proofs for the above examples as cluster algebras can be found in \cite{Lampe}.

\begin{theorem}
In general for $\mathcal{C}_{\mathfrak{g} ,l} \subset Rep U_q \hat{\mathfrak{g}}$ in the cases where Leclerc's conjecture is proven, we have an invariant for invertible bimodules valued in $GA_{n(l+1)}$. There is no information about invertible modules contained in this procedure. This structure also fits in diagrams of the form which gets access to more manageable parts of $GA_{n(l+1)}$

\begin{tikzcd}
Pic_{com} ( \mathcal{A}) = 0 \arrow[d] & & Aut_{cl}( \mathcal{A} ) \arrow[d,hookrightarrow] \arrow[r,hookrightarrow] & Aut_{\mathbb{Q}-alg} (\mathcal{A}) = GA_{n(l+1)} (\mathbb{Q}) \\
Pic_{com} ( \mathcal{X}_{f.t.} ) & & Aut_{cl}( \mathcal{A}^{ex} ) = G \arrow[r,hookrightarrow] & Aut_{\mathbb{Q}-alg} (\mathcal{A}^{ex}) \\
\end{tikzcd}

where $\mathcal{X}_{f.t}$ is the specified cluster variety of finite type for the appropriate Dynkin diagram and $G$ is listed in \cref{clustAut}\\

When asking about not necessarily invertible finitely generated projective (bi)modules.

\begin{tikzcd}
K_0 ( \mathcal{A}) = \mathbb{Z} \arrow[r] & K_0 ( \mathcal{A}^{ex})\\
K_0 ( \mathcal{A} \otimes_\mathbb{Q} \mathcal{A}) = \mathbb{Z} \arrow[r] & K_0 ( \mathcal{A}^{ex} \otimes_\mathbb{Q} \mathcal{A}^{ex})
\end{tikzcd}

\end{theorem}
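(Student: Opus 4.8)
The plan is to assemble the statement from the structural results already collected, treating the two displayed diagrams separately. For the top (Picard) diagram the key input is that, under Leclerc's conjecture, $\mathcal{A} = K_0(\mathcal{C}_l)\otimes\mathbb{Q}$ is by the lemma of \cite{HernandezLeclerc} a polynomial ring in $n(l+1)$ variables over $\mathbb{Q}$. First I would invoke the Yekutieli decomposition $Pic_\mathbb{Q}(\mathcal{A}) \simeq Aut_\mathbb{Q}(\mathcal{A}) \ltimes Pic_\mathcal{A}(\mathcal{A})$ and then kill the second factor using \cref{polPic}, which gives $Pic_{com}(\mathbb{Q}[x_1,\ldots,x_{n(l+1)}]) = 0$. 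Since $Pic_\mathcal{A}(\mathcal{A})$ is exactly the commutative Picard group, the semidirect product collapses to $Pic_\mathbb{Q}(\mathcal{A}) \simeq Aut_\mathbb{Q}(\mathcal{A})$, and the automorphism group of a polynomial ring in $m$ variables is $GA_m(\mathbb{Q})$ by definition. This simultaneously establishes the valuation in $GA_{n(l+1)}$ and the claim that no information about invertible modules survives, precisely because the module factor $Pic_\mathcal{A}(\mathcal{A})$ vanishes.

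Next I would construct the vertical maps of the diagram from the specialization homomorphism $\mathcal{A} \to \mathcal{A}^{ex}$ that fixes the frozen variables. On automorphisms this is the content of \cref{221}: the gluing-free finite-type hypothesis lets cluster automorphisms of $\mathcal{A}$ descend, giving $Aut_{cl}(\mathcal{A}) \hookrightarrow Aut_{cl}(\mathcal{A}^{ex}) = G$ with $G$ read off from \cref{clustAut}. On Picard groups the earlier lemma supplies $Pic\,\mathcal{A} \to Pic\,\mathcal{A}^{ex}$, and the left column records $Pic_{com}(\mathcal{X}_{f.t.})$, the Picard group of the finite-type cluster variety $\operatorname{Spec}\mathcal{A}^{ex}$. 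Commutativity of the square is then just functoriality of $Pic_{com}$ and $Aut_{cl}$ under the specialization map.

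For the bottom ($K_0$) diagram the argument is shorter. Since $\mathcal{A}$ is a polynomial ring, every finitely generated projective is free by Quillen--Suslin \cite{Lam}, whence $K_0(\mathcal{A}) = \mathbb{Z}$; the tensor square $\mathcal{A}\otimes_\mathbb{Q}\mathcal{A}$ is again a polynomial ring, now in $2n(l+1)$ variables, so the same theorem gives $K_0(\mathcal{A}\otimes_\mathbb{Q}\mathcal{A}) = \mathbb{Z}$. The horizontal maps are induced by the specialization $\mathcal{A} \to \mathcal{A}^{ex}$ and by its base change $\mathcal{A}\otimes_\mathbb{Q}\mathcal{A} \to \mathcal{A}^{ex}\otimes_\mathbb{Q}\mathcal{A}^{ex}$, which completes the bottom row.

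The main obstacle is not the top row, which is essentially formal, but making the left column meaningful: $\mathcal{A}^{ex}$ is no longer a polynomial ring but the coordinate ring of a hypersurface (the cubic and quartic seen in the $A_2$ and $A_3$ examples), so $Pic_{com}(\mathcal{X}_{f.t.})$ genuinely requires the algebraic geometry of these singular cluster varieties and is not pinned down by the seminormality criteria of \cref{Weibel} and \cref{ZpolPic} alone. Likewise, determining the image of $Aut_{cl}(\mathcal{A}) \hookrightarrow Aut_{cl}(\mathcal{A}^{ex})$ --- that is, which of the finite groups $G$ of \cref{clustAut} actually lifts once the frozen variables are reinstated --- is the delicate combinatorial point. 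I would therefore state the diagram as a structural result, establishing the isomorphisms and maps uniformly, and leave $Pic_{com}(\mathcal{X}_{f.t.})$ and the precise image of $Aut_{cl}(\mathcal{A})$ as the two entries to be resolved case by case rather than in closed form.
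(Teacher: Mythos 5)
Your proposal is correct and follows essentially the same route the paper intends: the paper gives no separate proof for this theorem, treating it as an assembly of the Yekutieli decomposition, the vanishing of $Pic_{com}$ for polynomial rings, the specialization maps to $\mathcal{A}^{ex}$ with the Chang--Zhu automorphism data, and Quillen--Suslin for the $K_0$ row --- exactly the pieces you invoke. Your closing caveats about $Pic_{com}(\mathcal{X}_{f.t.})$ and the image of $Aut_{cl}(\mathcal{A})$ match the paper's own stance in the $A_2$ and $A_3$ examples.
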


\begin{cor}
For a ``spin" system with $\mathfrak{g}=A_1$ at arbitrary $l \geq 2$, we get a $GA_{l+1} (\mathbb{Q})$ group for $Pic(\mathcal{A}_{A_1,l})$. The invertible bimodules for $\mathcal{A}^{ex}$ have $D_{l+3} = Aut_{cl} (\mathcal{A}^{ex}) \subset Aut (\mathcal{A}^{ex})$ giving $Pic(\mathcal{A}^{ex}) \simeq Aut (\mathcal{A}^{ex}) \ltimes Pic_{com} (\mathcal{A}^{ex})$.
The noninvertible finitely generated projective (bi)modules have a $\mathbb{Z}$ characterization when looking at $\mathcal{A}$. 
\end{cor}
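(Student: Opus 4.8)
The plan is to assemble the structural results already established, treating the full cluster algebra $\mathcal{A} = \mathcal{A}_{A_1,l}$ and its exchangeable part $\mathcal{A}^{ex}$ separately. First I would pin down the shape of $\mathcal{A}$. Since $\mathfrak{g} = A_1$ has rank $n = 1$, the Hernandez--Leclerc lemma together with Leclerc's conjecture (which is proven here for $A_1$ and all $l$) identifies $\mathcal{A}_{A_1,l} = K_0(\mathcal{C}_l) \otimes \mathbb{Q}$ with a polynomial ring in $n(l+1) = l+1$ variables over $\mathbb{Q}$. Applying the Yekutieli decomposition \cite{Yekutieli} gives $Pic(\mathcal{A}) \simeq Aut_{\mathbb{Q}-alg}(\mathcal{A}) \ltimes Pic_{com}(\mathcal{A})$; by \cref{polPic} the factor $Pic_{com}(\mathbb{Q}[x_1, \ldots, x_{l+1}])$ vanishes, while $Aut_{\mathbb{Q}-alg}$ of a polynomial ring in $l+1$ variables is by definition $GA_{l+1}(\mathbb{Q})$. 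Hence $Pic(\mathcal{A}_{A_1,l}) \simeq GA_{l+1}(\mathbb{Q})$, and because $Pic_{com}$ is trivial this procedure carries no module-level (as opposed to bimodule-level) information.

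Second, I would treat $\mathcal{A}^{ex}$. The fact that $Q_{A_1,l}$ is an $A_{l+1}$ quiver with a single frozen vertex at the end means the mutable part is an orientation of the $A_l$ Dynkin diagram, so $\mathcal{A}^{ex}$ is a cluster algebra of finite type $A_l$. The cluster automorphism group depends only on this mutation type, so \cref{clustAut} (the Chang--Zhu classification, row $A_{n>1}$ with $n = l$, which applies precisely because $l \geq 2$) yields $Aut_{cl}(\mathcal{A}^{ex}) = D_{l+3}$. These cluster automorphisms form a subgroup of the full ring automorphisms, giving the embedding $Aut_{cl}(\mathcal{A}^{ex}) \hookrightarrow Aut_{\mathbb{Q}-alg}(\mathcal{A}^{ex})$, while \cref{221} relates them to the automorphisms of the full algebra $\mathcal{A}$ by specialization of the frozen variables. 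A second application of \cite{Yekutieli} then produces the semidirect product $Pic(\mathcal{A}^{ex}) \simeq Aut(\mathcal{A}^{ex}) \ltimes Pic_{com}(\mathcal{A}^{ex})$.

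Third, for the non-invertible finitely generated projective (bi)modules I would invoke Quillen--Suslin exactly as in the earlier polynomial-ring example. Because $\mathcal{A} = \mathbb{Q}[x_1, \ldots, x_{l+1}]$ is a polynomial ring, every finitely generated projective is free, so $K_0(\mathcal{A}) = \mathbb{Z}$; and since $\mathcal{A} \otimes_{\mathbb{Q}} \mathcal{A} = \mathbb{Q}[x_1, \ldots, x_{l+1}, y_1, \ldots, y_{l+1}]$ is again a polynomial ring, the bimodule Grothendieck group is likewise $\mathbb{Z}$. This is the claimed $\mathbb{Z}$ characterization when looking at $\mathcal{A}$.

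The hard part is not any of these assemblies but the genuinely geometric factor $Pic_{com}(\mathcal{A}^{ex})$. Unlike $\mathcal{A}$, the exchangeable algebra is a singular affine hypersurface (a cubic for $l = 2$, a quartic for $l = 3$, and a degree $l+1$ cut in $\mathbb{P}^{l+1}$ in general), so its Picard group is not forced to vanish by \cref{Weibel} or by seminormality alone. Deciding whether these hypersurfaces are seminormal, and if not extracting the divisor-class contribution, is where the real effort would go; accordingly the corollary deliberately leaves this factor symbolic and asserts only the semidirect-product \emph{structure} rather than a closed form for $Pic(\mathcal{A}^{ex})$.
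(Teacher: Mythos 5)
Your proposal is correct and follows essentially the same route the paper intends: the corollary is the $n=1$ specialization of the preceding general theorem, assembled from the identification of $\mathcal{A}_{A_1,l}$ with a polynomial ring in $l+1$ variables, the Yekutieli semidirect-product decomposition with $Pic_{com}=0$ via \cref{polPic}, the Chang--Zhu value $D_{l+3}$ for the type-$A_l$ mutable part (valid since $l\geq 2$), and Quillen--Suslin for the $K_0=\mathbb{Z}$ statement. Your closing observation that $Pic_{com}(\mathcal{A}^{ex})$ is deliberately left symbolic also matches the paper's treatment.
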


\begin{remark}
If we had a bimodule category fusion of defects, then the cyclic subgroup determined by our favorite defect would have a map to $GA_{l+1}$. In particular if it was finite, we could have a hope of landing in a finite subgroup of maximal order. When $l=3$ or $5 \leq l \leq 9$ \cref{Feit} shows that they would be especially interesting. This might address questions of orders of defects in the q-fermion.
\end{remark}

\begin{cor}
For $\mathcal{A}_{A_{n},1}$ we can also be more specific because the exchangeable part also forms an $A_n$ finite quiver. That is the same exchangeable part as $\mathcal{A}_{A_1,n}$ above. $GA_{2n}$ replaces $GA_{l+1}$ and $Aut_{cl} (\mathcal{A})$ is different because of the different frozen variables. Everything else we have described is independent of the number of variables in the polynomial ring or only involves $\mathcal{A}^{ex}$.
\end{cor}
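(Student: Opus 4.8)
The plan is to reduce the whole statement to an explicit reading of the quiver $Q_{A_n,1}$ and then to import, essentially verbatim, the computations already carried out for $\mathcal{A}_{A_1,n}$. First I would unwind the definition of $Q_{\mathfrak{g},l}$ at $\mathfrak{g}=A_n$, $l=1$: the vertex set is $\{(i,k):1\le i\le n,\ k\in\{1,2\}\}$ with the $(i,2)$ frozen, so the exchangeable vertices are exactly the $n$ vertices $(i,1)$. Inspecting the three arrow types, only the first, $(i,1)\to(j,1)$ coming from an arrow $i\to j$ of $Q_{A_n,0}$, is an exchangeable-to-exchangeable arrow; the second and third types always change the level index $k$ and therefore touch the frozen level $k=2$. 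Hence the exchangeable subquiver is precisely the bipartite orientation of the $A_n$ Dynkin diagram, so $\mathcal{A}^{ex}$ is the finite type $A_n$ cluster algebra.

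Next I would compare with $\mathcal{A}_{A_1,n}$. Since $A_1$ has a single Dynkin node, only the third arrow type survives in $Q_{A_1,n}$, producing the linear $A_{n+1}$ quiver with frozen end $(1,n+1)$; deleting that frozen vertex leaves the linear orientation of $A_n$. The two exchangeable quivers differ only by orientation, but all orientations of a tree lie in one mutation class, so the two finite type $A_n$ cluster algebras agree, which is what is meant by ``the same exchangeable part.'' For the full algebras the generator counts differ: invoking the $l=1$ case of Leclerc's conjecture (established above) together with the Hernandez--Leclerc lemma, $\mathcal{A}_{A_n,1}\cong K_0(\mathcal{C}_1)\otimes\mathbb{Q}$ is a polynomial ring in $n(l+1)=2n$ generators, whereas $\mathcal{A}_{A_1,n}$ has $n+1$. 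Thus $Aut_{\mathbb{Q}-alg}(\mathcal{A})=GA_{2n}(\mathbb{Q})$ replaces the $GA_{l+1}(\mathbb{Q})=GA_{n+1}(\mathbb{Q})$ of the $A_1$ corollary.

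It then remains to check that the remaining data are insensitive to this change. By \cref{polPic}, $Pic_{com}$ of a polynomial ring over $\mathbb{Q}$ vanishes in any number of variables, and by Quillen--Suslin $K_0(\mathcal{A})=\mathbb{Z}$ independently of the generator count, so both diagrams of the governing theorem are unaffected. The group $Aut_{cl}(\mathcal{A}^{ex})$ (which is $D_{n+3}$ for $n>1$ and $\mathbb{Z}_2$ for $n=1$ by \cref{clustAut}) and the group $Pic_{com}(\mathcal{X}_{f.t.})$ of the associated finite type cluster variety depend only on $\mathcal{A}^{ex}$, which we have just identified with the $A_1,n$ case. The single genuinely different ingredient is $Aut_{cl}(\mathcal{A})$: the specialization inclusion of \cref{221} must now accommodate the $n$ frozen variables $x_{i,2}$ rather than the one frozen variable of $\mathcal{A}_{A_1,n}$, so the cluster automorphisms compatible with the frozen data cut out a different subgroup of $GA_{2n}(\mathbb{Q})$.

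The main obstacle is the bookkeeping in the first two steps rather than any deep point. One must verify carefully that the second and third arrow families create no exchangeable-to-exchangeable arrow (they do not, since both shift $k$), and one must argue that the orientation discrepancy between the bipartite $A_n$ and the linear $A_n$ is immaterial at the level of cluster algebras. The latter is the only place a reader might hesitate, and it is settled by the standard fact that mutation connects all acyclic orientations of a Dynkin tree, leaving the isomorphism type of $\mathcal{A}^{ex}$ and all of its attached invariants unchanged.
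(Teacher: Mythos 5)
Your proposal is correct and follows exactly the route the paper intends: the corollary is stated without an explicit proof, and its justification is precisely your reading of $Q_{A_n,1}$ (only the first arrow family stays within the exchangeable level $k=1$, giving the bipartite $A_n$, mutation-equivalent to the linear $A_n$ from $Q_{A_1,n}$) combined with the generator count $n(l+1)=2n$ and the observation that $Pic_{com}$, $K_0$, and the $\mathcal{A}^{ex}$-invariants are insensitive to the number of polynomial variables. Nothing is missing.
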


\begin{lemma}
Let $\mathcal{A}^{ex}$ be the cluster algebra given as the homogenous coordinate ring of a Grassmannian. Then $Cl(\mathcal{A}^{ex})=0$.
\end{lemma}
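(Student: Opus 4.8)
The plan is to recognize $\mathcal{A}^{ex}$ as the coordinate ring of the affine cone over the Grassmannian in its Pl\"ucker embedding, and then to read off the class group from the standard cone exact sequence. The upshot is that because the embedding is by the ample generator of $Pic$, the class group is killed entirely.

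First I would set $X = Gr(k,n) \hookrightarrow \mathbb{P}^{N}$ with $N = \binom{n}{k}-1$ the Pl\"ucker embedding, and let $\hat X = \operatorname{Spec}\mathcal{A}^{ex}$ be the affine cone, so that by hypothesis $\mathcal{A}^{ex} = \bigoplus_{d \ge 0} H^0(X, \mathcal{O}_X(d))$. The inputs I would invoke are classical: $X$ is smooth; the Pl\"ucker embedding is projectively normal, indeed arithmetically Cohen--Macaulay, via the straightening law / standard monomial theory for Pl\"ucker coordinates; and $Pic(X) = Cl(X) = \mathbb{Z}$, generated by the hyperplane class $[\mathcal{O}_X(1)]$ that defines the embedding. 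Projective normality guarantees that $\mathcal{A}^{ex}$ is a Noetherian normal domain, so $Cl(\mathcal{A}^{ex})$ is well defined and equals $Cl(\hat X)$.

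The main step is the cone sequence. The vertex $\{0\} \subset \hat X$ is a point of codimension $\dim \hat X = \dim X + 1 \ge 2$, so excising it leaves $Cl$ unchanged, while the punctured cone $\hat X \smallsetminus \{0\}$ is a $\mathbb{G}_m$-bundle over $X$. The resulting excision/torsor sequence reads
\[
0 \longrightarrow \mathbb{Z}\cdot[\mathcal{O}_X(1)] \longrightarrow Cl(X) \longrightarrow Cl(\hat X) \longrightarrow 0,
\]
the left-hand map sending the generator to the hyperplane class. Since $[\mathcal{O}_X(1)]$ is \emph{exactly} the generator of $Cl(X) = \mathbb{Z}$, this map is an isomorphism onto $Cl(X)$, so the cokernel vanishes and $Cl(\mathcal{A}^{ex}) = Cl(\hat X) = 0$.

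The point that requires care, and the only place an obstruction could enter, is the exactness and surjectivity of the cone sequence: this rests on $X$ being projectively normal (so that $\mathcal{A}^{ex}$ genuinely is the section ring and is normal) together with the embedding being by the ample generator $\mathcal{O}_X(1)$ rather than by a power $\mathcal{O}_X(d)$ with $d>1$. In the latter case the cokernel would instead be $\mathbb{Z}/d\mathbb{Z}$ and the conclusion would fail; for the Pl\"ucker embedding of a Grassmannian both facts are standard, so no torsion survives. I would also note that the argument is insensitive to the base field, hence applies over $\mathbb{Q}$. As a corollary, $Cl(\mathcal{A}^{ex}) = 0$ forces $\mathcal{A}^{ex}$ to be a UFD, so in particular $Pic_{com}(\mathcal{A}^{ex}) = 0$, sharpening the class-group entries in the diagrams above.
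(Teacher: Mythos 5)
Your argument is correct, but it reaches the conclusion by a genuinely different route than the paper. The paper's proof is essentially a citation: it lists the Grassmannians realizing the finite-type cluster algebras ($Gr(2,n+3)$ for $A_n$, $Gr(3,6)$, $Gr(3,7)$, $Gr(3,8)$ for $D_4$, $E_6$, $E_8$), invokes the known fact that their Pl\"ucker coordinate rings are unique factorization domains, and reads off the vanishing of the Weil class group from UFD-ness. You instead compute $Cl(\mathcal{A}^{ex})$ directly from the cone sequence $0 \to \mathbb{Z}\cdot[\mathcal{O}_X(1)] \to Cl(X) \to Cl(\hat X) \to 0$, using projective normality of the Pl\"ucker embedding (so that $\mathcal{A}^{ex}$ is the section ring and is normal, and removing the codimension $\geq 2$ vertex is harmless) together with the fact that $\mathcal{O}_X(1)$ is precisely the generator of $Cl(Gr(k,n)) \simeq \mathbb{Z}$; for you the UFD property is a corollary of $Cl = 0$ rather than the input. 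Your version buys transparency: it isolates exactly where the conclusion could fail (an embedding by $\mathcal{O}_X(d)$ with $d>1$ would leave a $\mathbb{Z}/d\mathbb{Z}$ of torsion), it is visibly uniform in $k$, $n$, and the base field rather than a case check against a list, and your closing remark that normality plus $Cl=0$ forces $Pic_{com}(\mathcal{A}^{ex})=0$ is a sharpening the paper does not state. What the paper's route buys in exchange is brevity and some extra geometric information carried along in the same proof (local acyclicity, canonical singularities, and the $K_0$ decomposition via the homogeneous coordinate ring), none of which is needed for the class-group claim itself.
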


\begin{proof}
The following correspond to the finite type cluster algebras \cite{Scott} 

\begin{itemize}
\setlength\itemsep{-1em}
\item $Gr(2,n+3)$ for $A_n$\\
\item $Gr(3,6)$ for $D_4$\\
\item $Gr(3,7)$ for $E_6$\\
\item $Gr(3,8)$ for $E_8$\\
\end{itemize}

But these are unique factorization domains \cite{Laface}. This then implies the Weil class group $Cl(Spec \mathcal{A}^{ex})$ is trivial. We also know that these examples are locally acyclic, therefore there can be at worst canonical singularities \cite{BenitoMuller}.

For $K_0$ we can apply the results of \cite{CHWW} for $\mathcal{A}^{ex}$ the homogenous coordinate ring of $X$ of dimension $d$ to give

\begin{eqnarray*}
K_0 ( \mathcal{A}^{ex}) &=& \mathbb{Z} \bigoplus Pic (\mathcal{A}^{ex}) \bigoplus_{p=1}^{d} \bigoplus_{k=1}^{\infty} H^p (X ,\Omega_X^p (k))\\
\end{eqnarray*}

Then \cite{Snow} gives results for each of the $H^p (X ,\Omega_X^p (k))$

\end{proof}

\iftrue
\section{Semiclassical Geometry}

Cluster algebras often also come as Poisson algebras. This means that we should classify (invertible) (bi)modules for these deformations. This deformation may or may not come from a deformation of the $\mathcal{C}_{\mathfrak{g},l} \subset Rep \; U_q \hat{\mathfrak{g}}$. Candidates for this may be possible using elliptic quantum groups \cite{Felder,ToledanoLaredo}.

\begin{definition}[Classical Limit]
Let $\mathcal{A}$ be the starting commutative unital Poisson algebra over the commutative unital ring $k$. (For us $\mathbb{Q}$). 
Then define $\mathcal{Q}$ to be $\mathcal{A}[[\hbar]]$ \footnote{$\hbar$ may not be the best notation because the system was already quantum. This is yet another deformation.} with product $\star$ as well as the classical limit maps $cl$

\begin{eqnarray*}
\mathcal{Q} &\to& \mathcal{A}\\
\sum_{r=0}^\infty \hbar a_r &\to& a_0\\
Pic (\mathcal{Q}) &\to& Pic( \mathcal{A})\\
K_0 ( \mathcal{Q}) &\to& K_0 ( \mathcal{A})\\
\end{eqnarray*}
\end{definition}

\begin{theorem}[\cite{Rosenberg96}]
The map on $K_0$ is an isomorphism of groups. The not necessarily invertible have no changes.
\end{theorem}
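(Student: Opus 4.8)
The plan is to reduce the statement to the standard rigidity of $K_0$ under completion, applied to the $\hbar$-adic structure of $\mathcal{Q}=\mathcal{A}[[\hbar]]$. The essential observation is that although the star product $\star$ deforms the multiplication, $\hbar$ remains central, so $I=\hbar\mathcal{Q}$ is a genuine two-sided ideal, the classical limit map $cl$ is the quotient $\mathcal{Q}\to\mathcal{Q}/I\cong\mathcal{A}$, and $\mathcal{Q}$ is complete and separated in the $I$-adic topology since $\mathcal{Q}\cong\varprojlim_n \mathcal{Q}/\hbar^n\mathcal{Q}$ by construction. First I would record that $I\subseteq\mathrm{rad}(\mathcal{Q})$: for any $a\in\mathcal{Q}$ the series $\sum_{m\ge 0}(\hbar a)^{\star m}$ converges $\hbar$-adically, so $1-\hbar a$ is a $\star$-unit, and hence every element of $I$ is quasi-regular. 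This puts us exactly in the hypotheses under which the reduction map on $K_0$ is an isomorphism.

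The surjectivity of $K_0(\mathcal{Q})\to K_0(\mathcal{A})$ I would obtain by idempotent lifting along the tower $\cdots\to\mathcal{Q}/\hbar^{k+1}\mathcal{Q}\to\mathcal{Q}/\hbar^{k}\mathcal{Q}\to\cdots$. Each transition map has square-zero kernel $\hbar^{k}\mathcal{Q}/\hbar^{k+1}\mathcal{Q}$, so an idempotent matrix over $\mathcal{A}$ lifts uniquely step by step, and completeness lets me pass to the limit to obtain an idempotent $e\in M_N(\mathcal{Q})$ reducing to the given one; its image is a finitely generated projective $\mathcal{Q}$-module mapping to the prescribed class. For injectivity I would prove the stronger statement that $P\mapsto P/\hbar P$ is an equivalence on finitely generated projectives: given such $P,P'$ with $P/\hbar P\cong P'/\hbar P'$, projectivity lifts the isomorphism to $\varphi\colon P\to P'$, which is surjective by Nakayama (using $I\subseteq\mathrm{rad}$), splits, and has kernel vanishing modulo $\hbar$, hence zero by separatedness; thus $P\cong P'$. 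Essential surjectivity of this functor is exactly the idempotent lifting just described. This single equivalence simultaneously yields the $K_0$ isomorphism and the second assertion, namely that the isomorphism classes of not-necessarily-invertible finitely generated projective modules are unchanged.

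For the bimodule version one repeats the argument with the enveloping algebra $\mathcal{Q}^{e}=\mathcal{Q}\otimes_{k[[\hbar]]}\mathcal{Q}^{\mathrm{op}}$; because $\hbar$ is still central, $\hbar\mathcal{Q}^{e}$ is two-sided, $\mathcal{Q}^{e}$ is again $\hbar$-adically complete, and its classical limit is $\mathcal{A}\otimes_k\mathcal{A}$, so the identical lifting argument applies to $\mathcal{Q}$--$\mathcal{Q}$ bimodules and leaves their classification unchanged.

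I expect the main obstacle to be the analytic bookkeeping of the power series ring rather than any conceptual difficulty: since $\mathcal{A}[[\hbar]]$ is typically non-Noetherian, I must verify that a finitely generated projective $\mathcal{Q}$-module is itself $\hbar$-adically complete and separated, so that the inverse-limit lifting actually converges. This is where I would be most careful, handling it by realizing $P$ as a direct summand of a finite free module $\mathcal{Q}^{N}$, on which completeness and separatedness are inherited termwise, so that the lifts assembled over the tower genuinely define elements of $\mathcal{Q}$ rather than merely formal limits.
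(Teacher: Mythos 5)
The paper offers no proof of this statement at all: it is quoted verbatim from the cited reference \cite{Rosenberg96}, so there is nothing internal to compare against. Your argument is the standard one underlying that citation and it is correct: $\hbar$ is central for any formal deformation (the star product is $k[[\hbar]]$-bilinear), so $\hbar\mathcal{Q}$ is a two-sided ideal lying in the Jacobson radical via the geometric series, $\mathcal{Q}$ is $\hbar$-adically complete and separated, idempotents lift along the tower of square-zero extensions, and Nakayama plus separatedness upgrade the lift to an equivalence of categories of finitely generated projectives; this gives both the $K_0$ isomorphism and the ``no changes'' claim for not-necessarily-invertible modules, and the enveloping-algebra remark handles bimodules. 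Your flagged worry about non-Noetherianity is handled correctly by realizing $P$ as a summand of $\mathcal{Q}^N$. The only blemish is the word ``uniquely'' in the idempotent-lifting step: lifts of an idempotent through a square-zero extension exist but are unique only up to conjugation, which is all you need and does not affect the argument.
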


\begin{theorem}[\cite{Bursztyn}]
The kernel of $cl$ on $Pic$ is in 1-1 correspondence with outer self-equivalences of $\mathcal{Q}$. The image can be described in terms of the action of $Pic(\mathcal{A})$ on deformations through gauge transformation equivalence. This is also called a B-field transform.
\end{theorem}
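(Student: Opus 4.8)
The plan is to analyze invertible $\mathcal{Q}$--$\mathcal{Q}$ bimodules by $\hbar$-adic deformation theory, using that $\mathcal{Q}=\mathcal{A}[[\hbar]]$ is $\hbar$-adically complete and separated and that $cl$ is reduction modulo $\hbar$. Since $cl$ is a group homomorphism $Pic(\mathcal{Q})\xrightarrow{cl}Pic(\mathcal{A})$, I would treat its kernel and its image separately, mirroring the semidirect decomposition $Pic\simeq Aut\ltimes Pic_{com}$ already used above.

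For the kernel, take an invertible bimodule $\mathcal{P}$ whose classical limit is the trivial line bundle, i.e. $\mathcal{P}/\hbar\mathcal{P}$ is free of rank one over $\mathcal{A}$. First I would lift a classical generator and, proceeding order by order in $\hbar$ (each step unobstructed because the successive obstructions live in $\hbar\cdot\mathrm{Hom}$ and $\mathcal{Q}$ is complete), upgrade it to a left $\mathcal{Q}$-module isomorphism $\mathcal{P}\cong\mathcal{Q}$. The residual right action is then recorded by a $\mathbb{Q}[[\hbar]]$-linear self-equivalence $\Phi$ of $\mathcal{Q}$, so $\mathcal{P}\cong\mathcal{Q}_\Phi$; conversely every such $\Phi$ gives a bimodule with free underlying left module, hence in $\ker(cl)$. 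Two twists give isomorphic bimodules exactly when the $\Phi$ differ by conjugation by a $\star$-unit, that is by an inner self-equivalence, and composition of twists multiplies the $\Phi$. Hence $\ker(cl)$ is the group of self-equivalences of $\mathcal{Q}$ modulo inner ones, the outer self-equivalences, which is the first claim.

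For the image I would ask which classical elements $L\in Pic(\mathcal{A})$ lift to an invertible $\mathcal{Q}$-bimodule. Deforming the two commuting actions on $L$ order by order, the obstruction to keeping them compatible is a characteristic-class computation: following the Morita theory of star products, the endomorphism star product that $L$ induces must be equivalent to $\star$, and the two differ by a gauge, or B-field, transform determined by the class of $L$. The image is therefore the orbit of the natural $Pic(\mathcal{A})$-action on equivalence classes of deformations, cut out by the integrality of that B-field transform, which is the stated description. The main obstacle is precisely this obstruction analysis for the image: one must identify the relevant Hochschild cohomology classes with the characteristic class of the deformation and show that solvability of the lifting problem is equivalent to the integrality of the B-field shift. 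By comparison the kernel statement is essentially formal once $\mathcal{P}$ has been realized on $\mathcal{Q}$, the only delicate point being that inner self-equivalences exhaust the trivializing ambiguity, which is immediate from Morita theory over the complete local base $\mathbb{Q}[[\hbar]]$.
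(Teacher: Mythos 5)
The paper does not actually prove this statement: it is imported verbatim from \cite{Bursztyn} (Bursztyn's work on Picard groups of star products), so there is no internal proof to compare against. Your sketch does follow the strategy of that reference: rigidity of the free rank-one module under formal deformation trivializes the left action, the right action is then encoded by a self-equivalence $\Phi$ with $\mathcal{P}\cong\mathcal{Q}_\Phi$, and $\mathcal{Q}_\Phi\cong\mathcal{Q}_\Psi$ precisely when $\Phi\circ\Psi^{-1}$ is inner, giving $\ker(cl)\simeq$ outer self-equivalences; the image is analyzed via the star product induced on endomorphisms of a deformed line bundle and its comparison with $\star$ up to gauge (B-field) equivalence. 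Two points deserve tightening. First, ``classical limit trivial'' means trivial as a \emph{bimodule}, not merely free of rank one as a left module; for commutative $\mathcal{A}$ a rank-one-free invertible bimodule is $\mathcal{A}_\sigma$ for some automorphism $\sigma$, and it is the condition $\sigma=\id$ that forces your $\Phi$ to reduce to the identity modulo $\hbar$, i.e.\ to be a self-equivalence in the technical sense rather than an arbitrary automorphism of $\mathcal{Q}$. Without that remark your $\Phi$ could a priori be any automorphism and you would be computing $cl^{-1}(\Aut$-part$)$ rather than the kernel. Second, the image half remains a statement of intent: you correctly identify that the work lies in matching the Hochschild-cohomological obstruction to the characteristic class of the deformation, but you do not carry it out; since the theorem as stated in the paper is itself only a qualitative description of the image, this is acceptable as a sketch, but it is the part of \cite{Bursztyn} that is genuinely nontrivial and cannot be called ``essentially formal.''
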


In the case where taking $\mathbb{R}$ points has given a real Poisson manifold $M$, one can ask the differential geometric analog\footnote{See \cite{Weinstein} for the Morita theory for Poisson manifolds.} and with complex valued functions here we get:

\begin{theorem}[7.1 of \cite{BursztynWaldmann}]

Let the deformation $\mathcal{Q}$ with $\star$ be such that:

\begin{itemize}
\setlength\itemsep{-1em}
\item There exists a linear map from the Poisson center $\mathcal{Z}_\pi (\mathcal{A})$ to the center $Z(\mathcal{Q})$ with $f \to f+ O(\hbar)$\\
\item There exists a linear map $PDer(\mathcal{A}) \to Der(\mathcal{Q})$ where the vector field $X \to \mathcal{L}_X + O(\hbar)$ and for Hamiltonian vector fields, $X_H \to \frac{i}{\hbar} ad_H$
\end{itemize}

then for the classical limit map on $Pic$. 

\begin{eqnarray*}
ker \; cl &\leftrightarrow& \frac{H_\pi^1 (M ,\mathbb{C})}{2\pi i H_\pi^1 ( M , \mathbb{Z})} + \hbar H_\pi^1 (M,\mathbb{C})[[\hbar]]\\
H_\pi^1 (M , \mathbb{C}) =  0 &\implies& ker \; cl = 0\\ &\implies& Pic(\mathcal{Q}) \hookrightarrow Diff(M) \ltimes H^2 (M,\mathbb{Z})
\end{eqnarray*}

\end{theorem}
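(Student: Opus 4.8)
The plan is to reduce the computation of $\ker cl$ to a purely cohomological problem by exploiting the earlier result of \cite{Bursztyn} quoted above, which already identifies $\ker cl$ with the group of \emph{outer} self-equivalences of $\mathcal{Q}$ --- that is, self-Morita equivalence bimodules which become the trivial bimodule at $\hbar = 0$, taken modulo inner equivalences. First I would pin down what an outer self-equivalence is in concrete terms: up to inner equivalence such a bimodule is classified by an automorphism of $\mathcal{Q}$ reducing to $\id$ modulo $\hbar$, together with the choice of a quantum line-bundle module structure whose classical limit is trivial. The two hypotheses are exactly what is needed to linearize this data. The first hypothesis, the lift $\mathcal{Z}_\pi(\mathcal{A}) \to Z(\mathcal{Q})$, guarantees that central elements can be compared order by order in $\hbar$, so that the trivializing data is forced to live in the center; the second hypothesis, the lift $PDer(\mathcal{A}) \to Der(\mathcal{Q})$ sending Hamiltonian vector fields $X_H$ to the inner derivations $\tfrac{i}{\hbar} ad_H$, identifies the infinitesimal \emph{outer} symmetries of $\mathcal{Q}$ with Poisson vector fields modulo Hamiltonian ones, i.e.\ with $H^1_\pi(M,\mathbb{C}) = PDer(\mathcal{A})/\{X_H\}$.

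The second step is to exponentiate and organize the kernel by $\hbar$-adic order, splitting an element into its leading classical contribution and its genuinely formal tail. The tail --- the part of order $\hbar^1$ and higher --- is unobstructed: using the second hypothesis, a formal family of outer derivations integrates to a formal family of outer self-equivalences, and since there is no integrality constraint at positive $\hbar$-order this contributes precisely the free summand $\hbar H^1_\pi(M,\mathbb{C})[[\hbar]]$. The subtle piece is the $\hbar^0$ contribution, which comes from the B-field / gauge transform of the star product by a classically trivial line bundle. This is where the integrality enters: such a line-bundle datum contributes trivially to $\ker cl$ exactly when its period class lies in the image of the integral lattice, so the classical contribution is the quotient $H^1_\pi(M,\mathbb{C})/2\pi i\, H^1_\pi(M,\mathbb{Z})$. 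Assembling the two pieces gives the claimed description of $\ker cl$.

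The hard part, and the step I would spend the most care on, is this integrality statement at order $\hbar^0$: one must show that the gauge (B-field) equivalences of the star product coming from line bundles descend to $H^1_\pi(M,\mathbb{C})$ and that two of them give isomorphic bimodules precisely when their difference lies in $2\pi i\, H^1_\pi(M,\mathbb{Z})$. This requires tracking the Deligne/Fedosov characteristic class of the star product under B-field transforms and matching its periods against the Chern class of the classifying line bundle, which is where both the factor $2\pi i$ and the integral Poisson cohomology lattice originate. Once $\ker cl$ is in hand, the two corollaries are immediate: if $H^1_\pi(M,\mathbb{C}) = 0$ then both summands vanish, so $cl$ is injective, and composing with the classical identification of $Pic(\mathcal{A})$ (the Poisson diffeomorphisms $Diff(M)$ acting on smooth complex line bundles, topologically classified by $H^2(M,\mathbb{Z})$) yields the embedding $Pic(\mathcal{Q}) \hookrightarrow Diff(M) \ltimes H^2(M,\mathbb{Z})$.
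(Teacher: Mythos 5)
The paper does not actually prove this statement: it is imported verbatim as Theorem~7.1 of \cite{BursztynWaldmann}, and the only argument the paper supplies in this section is the one-line proof of the subsequent corollary (that $H^1(M)\simeq H^1_\pi(M)$ in the symplectic case). So there is no in-paper proof to compare yours against, and your submission should be judged as a reconstruction of the cited source's argument.

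As such a reconstruction it follows the right strategy: reduce $\ker cl$ to outer self-equivalences via the earlier quoted result of \cite{Bursztyn}, use the second hypothesis to identify infinitesimal outer symmetries with $PDer(\mathcal{A})/\{X_H\}\simeq H^1_\pi(M,\mathbb{C})$, exponentiate, and separate the order-$\hbar^0$ contribution (where the $2\pi i\,H^1_\pi(M,\mathbb{Z})$ integrality enters through the characteristic class of the B-field transform) from the unobstructed formal tail $\hbar H^1_\pi(M,\mathbb{C})[[\hbar]]$. The gap is that the step you yourself flag as the hard part \emph{is} the theorem: the claim that two classically trivial self-equivalence bimodules are isomorphic precisely when their classes differ by an element of $2\pi i\,H^1_\pi(M,\mathbb{Z})$, and that the positive-order exponentiation is injective and unobstructed, is asserted rather than established. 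Both hypotheses of the theorem are used only descriptively in your sketch; nowhere do you show that they suffice to run the order-by-order argument (in particular the first hypothesis, the lift of the Poisson center, is needed to control which derivations are quasi-inner, and your sketch does not engage with it). If you intend a self-contained proof, you must carry out that computation; if you intend a reading of \cite{BursztynWaldmann}, it is a faithful outline, which is all the paper itself offers.
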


\begin{cor}
If in addition $\pi = \omega^{-1}$ and $H^1 ( M ) = 0$, then $Pic( \mathcal{Q}) \hookrightarrow Diff(M) \ltimes H^2 (M , \mathbb{Z})$
\end{cor}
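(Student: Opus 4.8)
The plan is to recognize the statement as a direct specialization of the preceding Theorem~7.1: once we verify that the extra hypotheses force $H^1_\pi(M,\mathbb{C}) = 0$, the desired embedding is exactly the last displayed line of that theorem. So the only real work is translating the symplectic condition $\pi = \omega^{-1}$ together with $H^1(M) = 0$ into the vanishing of the first Poisson cohomology.

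First I would invoke the standard identification of Poisson and de Rham cohomology in the nondegenerate case. When $\pi = \omega^{-1}$ the bundle map $\omega^\flat : TM \to T^*M$, $X \mapsto \iota_X \omega$, is an isomorphism; extending it to an isomorphism $\Omega^k(M) \xrightarrow{\sim} \mathfrak{X}^k(M)$ in all degrees and checking that it intertwines the de Rham differential $d$ with the Lichnerowicz--Poisson differential $d_\pi = [\pi, -]$ (Schouten bracket) gives a cochain isomorphism of the two complexes. Consequently $H^k_\pi(M) \cong H^k_{\mathrm{dR}}(M)$ for every $k$; in particular $H^1_\pi(M,\mathbb{C}) \cong H^1_{\mathrm{dR}}(M;\mathbb{C})$.

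Next I would pass the vanishing hypothesis through this isomorphism. Since $H^1(M) = 0$ (the first Betti number is zero), we have $H^1_{\mathrm{dR}}(M;\mathbb{C}) = H^1(M;\mathbb{R}) \otimes_{\mathbb{R}} \mathbb{C} = 0$, and therefore $H^1_\pi(M,\mathbb{C}) = 0$. Feeding this into the second conclusion of Theorem~7.1 yields simultaneously $\ker cl = 0$ and the embedding $Pic(\mathcal{Q}) \hookrightarrow Diff(M) \ltimes H^2(M,\mathbb{Z})$, which is the claim.

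The main obstacle --- really the only point needing care rather than citation --- is confirming that the cochain map $\omega^\flat$ genuinely commutes with the two differentials, i.e.\ that raising indices with the inverse symplectic form sends closed forms to $d_\pi$-closed multivector fields and exact to exact. This is where nondegeneracy of $\pi$ (equivalently $d\omega = 0$) is essential; without it one only has an inclusion of symplectic vector fields into Poisson vector fields and no cohomological isomorphism. A secondary check is that the two standing structural hypotheses of Theorem~7.1 (the linear lifts of the Poisson center $\mathcal{Z}_\pi(\mathcal{A})$ and of $PDer(\mathcal{A})$) remain available in the symplectic setting, where the Poisson center reduces to locally constant functions and $PDer(\mathcal{A})$ to symplectic vector fields; these are inherited from the theorem's assumptions and require no new input.
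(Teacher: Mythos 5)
Your proposal is correct and follows the same route as the paper: the paper's entire proof is the one-line observation that in the symplectic case $H^1(M) \simeq H^1_\pi(M)$, which is exactly the identification you establish via $\omega^\flat$ before feeding the vanishing into the preceding theorem. Your version simply spells out the cochain isomorphism the paper takes for granted.
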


\begin{proof}
In the symplectic case $H^1 (M) \simeq H_\pi^1 (M)$.
\end{proof}

\fi

\section{Conclusion}

Taking inspiration from the algebraic Bethe ansatz with boundaries and defects we have considered the characterization of (invertible) (bi)modules for the associated cluster algebras. We have found interesting groups that might or might not lift to the categorical level of the spin chain.

There are many questions this raises. Among these are producing actual bimodule categories for these $\mathcal{C}_{\mathfrak{g},l}$. Our main tool for producing interesting bimodule categories is covariantization of coquasitriangular algebras \cite{self,KolbStokman}. We would also like to work in the RTT realization. The difference of coproducts means physical properties will be drastically different between these cases.

Another question is if the $\otimes \mathbb{Q}$ can be avoided. This is because the algebras were actually defined over $\mathbb{Z}$. Even for polynomial rings we get complications that can be computed as \cref{ZpolPic} and \cref{ZpolAut}. That gives the interesting group $GA_{n(l+1)} (\mathbb{Z})$. The exchangeable part which is not a polynomial algebra will have even more complications. For example, the singularity for $A_n$ when $n \equiv 3 (\text{mod} 4)$ \cite{Muller}. The Picard groups for these can be tackled with \cref{Weibel}. Life in a ring is harder than life in a field. An infinitesimal formal degree $2$ parameter can be used to tropicalize by defining $\xi_i = T \log x_i$. That resembles the case of equivariant K-theory of Grassmannians \cite{Smirnov}\footnote{The homological degee provides a caution for when one can expect convergence for numerical nonzero values of a parameter.}. $K_0 ( \mathcal{C}_{A_n,l} ) \otimes_{\mathbb{Z}} \mathbb{C}$ is also a quotient of the homogenous  coordinate ring of $Gr(n+1,n+l+2)$ so relations with the Amplitudehedron are possible \cite{Nima}.

In all cases we have provided a map $\mathbb{Z} \to K_0 (\mathcal{A}^{ex})$. We would like to calculate the image of $1$ and upon rationalization get these special elements of the Chow rings for all $\mathfrak{g}$ and $l$. The same is true for the bimodule case. We also described the analogous case for formal deformations. The differential geometric $Pic$ changed in a reasonably manageable way, but we do not know about the algebraic side.

\bibliographystyle{ieeetr}
\bibliography{clusterCategoryDefect2}

\end{document}